\newtheorem{theorem}{Theorem}
\newtheorem{lemma}[theorem]{Lemma}
\newcommand{\rev}[1]{{\color{red}#1}} 
\newcommand{\del}[1]{\st{#1}} 
\newcommand{\com}[1]{\textbf{\color{red} (COMMENT: #1)}} 
\newcommand{\response}[1]{\textbf{\color{green} (RESPONSE: #1)}} 
\newcommand{\rev}[1]{#1}
\newcommand{\del}[1]{}
\newcommand{\com}[1]{}
\newcommand{\comg}[1]{}
\newcommand{\response}[1]{}
\title{\huge {Efficient Channel Estimation for Double-IRS Aided Multi-User MIMO System}}
\author{ 
	Beixiong Zheng,~\IEEEmembership{Member,~IEEE}, Changsheng You,~\IEEEmembership{Member,~IEEE}, and Rui Zhang,~\IEEEmembership{Fellow,~IEEE} 
	
	\thanks{
		Part of this work will be presented in IEEE International Conference on Communications (ICC), Montreal, Canada, 2021~\cite{zheng2020Uplink}. This work was supported by the National University of Singapore under Grant R-261-518-005-720. \emph{(Corresponding author: Changsheng You.)}
		
		The authors are with the Department of Electrical and Computer Engineering, National University of Singapore, Singapore 117583,
		email: \{elezbe, eleyouc, elezhang\}@nus.edu.sg.

	}
}
\begin{document}
\markboth{IEEE Transactions on Communications, Vol. XX, No. XX, XXX 2021}{SKM: My IEEE article}
\maketitle
\begin{abstract}
To achieve the more significant passive beamforming gain in the double-intelligent reflecting surface (IRS) aided system over the conventional single-IRS counterpart,  channel state information (CSI) is indispensable in practice but also more challenging to acquire, due to the presence of not only the single- but also double-reflection links that are intricately coupled and also entail more channel coefficients for estimation.
In this paper, we propose a new and efficient channel estimation scheme for the double-IRS aided multi-user multiple-input multiple-output (MIMO) communication system to resolve the cascaded CSI of both its single- and double-reflection links. 
First, for the single-user case, the single- and double-reflection channels are efficiently estimated at the multi-antenna base station (BS) with both the IRSs turned ON (for maximal signal reflection), by exploiting the fact that their cascaded channel coefficients are scaled versions of their superimposed lower-dimensional CSI.
Then, the proposed channel estimation scheme is extended to the multi-user case, where given an arbitrary user's cascaded channel (estimated as in the single-user case), the other users' cascaded channels can also be expressed as lower-dimensional scaled versions of it and thus efficiently estimated at the BS.
Simulation results verify the effectiveness of the proposed channel estimation scheme and joint training reflection design for double IRSs, as compared to various benchmark schemes.
\end{abstract}
\begin{IEEEkeywords}
	Intelligent reflecting surface (IRS), double IRSs, channel estimation, training reflection design, multi-user multiple-input multiple-output (MIMO).
\end{IEEEkeywords}
\IEEEpeerreviewmaketitle

\section{Introduction}
\IEEEPARstart{I}{ntelligent}
reflecting surface (IRS) has recently emerged as an innovative solution to the realization of smart
and reconfigurable environment for wireless communications \cite{wu2020intelligent,qingqing2019towards,Renzo2019Smart,basar2019wireless}. Specifically, IRS consists of a large number of passive reflecting 
elements with ultra-low power consumption, each being able to control the reflection phase shift and/or amplitude of the incident signal in a programmable manner so as to collaboratively reshape the wireless propagation channel in favor of signal transmission. 
As such, different from the existing transmission techniques that can only adapt to but have no control over the random wireless channels,
IRS provides a new means to effectively combat with wireless channel fading impairment and co-channel interference.
Furthermore, as being of light weight and free of radio frequency (RF) chains, large-scale IRSs can
be densely deployed in various wireless communication systems \cite{Zheng2020IRSNOMA,Pan2020Multicell,Yang2020IRS,Wu2020Weighted2,Pan2020Intelligent,Cui2019Secure,Zhou2020Delay} to enhance their performance at low as well as sustainable energy and hardware cost.

\rev{Prior works on IRS mainly considered the wireless communication systems aided by one or multiple distributed IRSs (see, e.g., \cite{Wu2019TWC,Huang2019Reconfigurable,wu2019beamforming,mei2020performance,Zhang2020Intelligent,yang2020energy,Zhang2020Capacity,li2019jointactive,yang2020outage,Gao2020Distributed,Wei2020Sum}),}
each independently serving its nearby users without taking into account the inter-IRS signal reflection, which, however, fails to capture the cooperative beamforming gains between IRSs to further improve the system performance. Only recently,
the cooperative beamforming over inter-IRS channel has been explored in the double-IRS aided communication system \cite{Han2020Cooperative,Zheng2020DoubleIRS,you2020wireless}, which was shown to achieve a much higher-order passive beamforming gain than that of the conventional single-IRS aided system (i.e., ${\cal O}(M^4)$ versus ${\cal O}(M^2)$ with $M$ denoting the total number of reflecting elements/subsurfaces in both systems \cite{Han2020Cooperative}). However, achieving this more appealing passive beamforming gain requires more channel training overhead in practice, due to the significantly more channel coefficients to be estimated over the inter-IRS double-reflection link, in addition to the single-reflection links in the conventional single-IRS system.
\rev{Note that existing works on IRS channel estimation mainly focused on the channel state information (CSI) acquisition for the single-reflection links only \cite{zheng2019intelligent,zheng2020intelligent,zheng2020fast,you2019progressive,you2020fast,wang2019channel,jensen2019optimal,liu2019matrix,wang2019compressed,Zhang2020Cascaded},} which, however, are inapplicable to the double-IRS aided system with the co-existence of single- and double-reflection links as shown in Fig.~\ref{system}.
In \cite{Han2020Cooperative}, the authors assumed that the two IRSs in the double-IRS system are equipped with receive RF chains to enable
signal sensing for estimating their channels with the base station (BS)/user, separately. However, even with receive RF chains integrated to IRSs, \rev{the channel estimation for the inter-IRS (i.e., IRS~1$\rightarrow$IRS~2 in Fig.\ref{system}) link still remains a difficult task in practice (unless active sensors that can both transmit and receive signals are mounted on the IRSs, which inevitably incurs additional hardware cost and complexity).}
In contrast, the double-IRS channel estimation with fully-passive IRSs was investigated in \cite{you2020wireless}, but without the single-reflection links considered and for the single-user case only. 
In \cite{zheng2020Uplink}, we presented a decoupled channel estimation scheme for the general double-IRS aided system with multiple BS antennas/users, which successively estimates the two single-reflection channels (each corresponding to one of the two IRSs turned ON with the other turned OFF, respectively) and then the double-reflection channel (with the signals canceled over the two single-reflection channels estimated) in a decoupled manner. 
\rev{Although this scheme substantially reduces the training overhead of that in \cite{you2020wireless}, it suffers from not only
residual interference due to imperfect signal cancellation (based on the estimated single-reflection channels) but also
considerable reflection power loss due to the ON/OFF reflection control of the two IRSs during the channel training, which thus degrade the channel estimation accuracy.}

\begin{figure}[!t]
	\centering
	\includegraphics[width=3.0in]{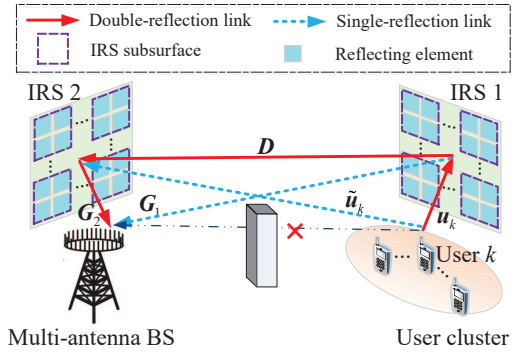}
	\caption{A double-IRS cooperatively aided multi-user MIMO communication system in the uplink.}
	\label{system}
\end{figure}
To overcome the above issues, we propose in this paper a new and efficient channel estimation scheme for the double-IRS aided multi-user multiple-input multiple-output (MIMO) communication system shown in Fig.~\ref{system}, 
where two fully-passive IRSs are deployed near a multi-antenna BS and a cluster of $K$ users, respectively, to assist their communications.
\rev{Different from the decoupled channel estimation scheme in \cite{zheng2020Uplink}, we propose to jointly estimate the cascaded CSI of the single- and double-reflection links with the always-ON training reflection (i.e., all the reflecting elements/subsurfaces of the two IRSs are switched ON with the maximum reflection amplitude during the entire channel training) while properly tuning their reflection phase-shifts over time, which thus
achieves the full-reflection power for improving the channel estimation accuracy significantly.}
Note that with the always-ON reflection of both IRSs, the two single-reflection channels (each corresponding to one of the two IRSs, respectively) are superimposed with the double-reflection channel at all time, which makes them difficult
to be estimated separately at the BS based on the received user pilot signals over the  intricately coupled single- and double-reflection links.
To tackle this difficulty, we explore the intrinsic relationship between the single- and double-reflection channels as well as that among different users 
and design the training reflection phase-shifts of the two IRSs
accordingly 
to achieve low training overhead and yet high channel estimation accuracy. Specifically, the main contributions of this paper are summarized as follows (please refer to Fig.~\ref{system} for the channel illustration).
\begin{itemize}
	\item First, for the single-user case, the superimposed CSI of the single- and double-reflection channels related to IRS~2 is estimated at the multi-antenna BS with dynamically tuned reflection phase-shifts of IRS~2 over time and those of IRS 1 being fixed.
	Then, the high-dimensional double-reflection (i.e., user$\rightarrow$IRS~1$\rightarrow$IRS~2$\rightarrow$BS) channel and the single-reflection (i.e., user$\rightarrow$IRS~2$\rightarrow$BS) channel are efficiently estimated,
	by exploiting the fact that their cascaded channel coefficients are scaled versions of their superimposed CSI due to their commonly shared IRS~2$\rightarrow$BS link; as a result, by taking the superimposed CSI as the reference CSI, only the lower-dimensional scaling factors and the other single-reflection (i.e., user$\rightarrow$IRS~1$\rightarrow$BS) channel need to be estimated at the multi-antenna BS, thus greatly reducing the training overhead. The proposed channel estimation scheme is further extended to the multi-user case, where given an arbitrary user's cascaded channel estimated as in the single-user case, the other users' cascaded channels are also (lower-dimensional) scaled versions of it and thus can be efficiently estimated at the BS with low training overhead.
	\item Moreover, for the proposed channel estimation scheme, we jointly optimize the training reflection phase-shifts of the two IRSs to minimize the channel estimation error over different training phases, for which some closed-form optimal solutions are derived with desired orthogonality.
	In addition, we analytically characterize the minimum training overhead of the proposed scheme, which is shown to decrease with the increasing number of BS antennas $N$ until reaching its lower bound with $N>M$.
	We also show an interesting trade-off in training time allocation over different training phases with the intricate error propagation effect taken into account.
	\item Last, we provide extensive numerical results to validate the superiority of our proposed channel estimation scheme to other benchmark schemes in terms of both training overhead and channel estimation error.
	Besides, the effectiveness of the proposed joint training reflection phase-shift design for double IRSs is also corroborated
	by simulation results, as compared to other heuristic training reflection designs.
\end{itemize}

The rest of this paper is organized as follows. Section~\ref{sys} presents the system model for the double-IRS aided multi-user MIMO system. In Sections~\ref{SU_design_ON} and \ref{MU_design_ON}, we propose efficient channel estimation schemes with optimized training reflection phase-shift designs for the single-user and multi-user cases, respectively. 
Simulation results are presented in Section \ref{Sim} to evaluate the performance of the
proposed channel estimation scheme and training reflection phase-shift design. Finally, conclusions are drawn in Section~\ref{conlusion}.

\emph{Notation:} 
Upper-case and lower-case boldface letters denote matrices and column vectors, respectively.
Superscripts ${\left(\cdot\right)}^{T}$, ${\left(\cdot\right)}^{H}$, ${\left(\cdot\right)}^{*}$ ${\left(\cdot\right)}^{-1}$, and ${\left(\cdot\right)}^{\dagger}$ stand for the transpose, Hermitian transpose, conjugate, matrix inversion, and pseudo-inverse operations, respectively.
${\mathbb C}^{a\times b}$ denotes the space of ${a\times b}$ complex-valued matrices.
For a complex-valued vector $\bm{x}$, $\lVert\bm{x}\rVert$ denotes its $\ell_2 $-norm 
and $\mathrm{diag} (\bm{x})$ returns a diagonal matrix with the elements in $\bm{x}$ on its main diagonal. 
For a general matrix ${\bm A}$,
${\rm rank} \left( {\bm A} \right)$ returns its rank,
$\left\|  {\bm A}\cdot \right\|_F$ denotes its Frobenius norm,
${\rm vec} \left({\bm A} \right)$ denotes the vectorization of ${\bm A}$ by stacking its columns into a column vector,
and $[{\bm A}]_{a:b,c:d}$ denotes the submatrix of ${\bm A}$ with its rows from $a$ to $b$ and columns from $c$ to $d$.
${\cal O}(\cdot)$ denotes the standard big-O notation,
$\lceil \cdot \rceil$ is the ceiling function,
$\otimes$ denotes the Kronecker product, 
and ${\mathbb E}\{\cdot\}$ stands for the statistical expectation.
${\bm I}$, ${\bm 1}$, and ${\bm 0}$ denote an identity matrix, an all-one vector/matrix, and an all-zero vector/matrix, respectively, with appropriate dimensions.
The distribution of a circularly symmetric complex Gaussian (CSCG) random vector with mean vector $\bm \mu$ and covariance matrix ${\bm \Sigma}$ is denoted by ${\mathcal N_c }({\bm \mu}, {\bm \Sigma} )$; and $\sim$ stands for ``distributed as".
\section{System Model}\label{sys}
Consider a double-IRS cooperatively aided multi-user MIMO communication system shown in Fig.~\ref{system}, in which two distributed IRSs (referred to as IRS~1 and IRS~2) are deployed to assist the communications between a cluster of $K$ single-antenna users and an $N$-antenna BS.\footnote{\rev{The results in this paper can be readily extended to the case of multiple user clusters each with a helping IRS deployed locally, by e.g., treating these distributed IRSs as an equivalent IRS (i.e., IRS~1) of larger size accordingly.}}
As in \cite{Zheng2020DoubleIRS}, we consider a practical scenario where the direct links between the users in the cluster of interest and the BS are severely blocked due to obstacles (e.g.,
walls/corners in the indoor environment) and thus can be ignored.\footnote{
	\rev{If the direct links are non-negligible, the BS can estimate them by using the conventional channel estimation method with orthogonal/sequential pilots sent from the users and the two IRSs both turned OFF.}}
To bypass the blockage as well as minimize the path loss, IRSs~1 and 2 are properly placed near the cluster of users and the BS, respectively,
such that the $K$ users can be effectively served by the BS through both the single- and double-reflection links created by them.
Let $M$ denote our budget on the total number of passive subsurfaces for the two distributed IRSs, where IRSs~1 and 2 comprise $M_1$ and $M_2$ subsurfaces, respectively, with $M_1+M_2=M$.
Similar to the element-grouping strategy in \cite{yang2019intelligent,zheng2019intelligent}, each of these IRS subsurfaces is composed of an arbitrary number of adjacent reflecting elements that share a common phase shift for reducing the channel estimation and reflection design
complexity.\footnote{\rev{In this paper, we consider the IRS element-based product-distance path-loss model \cite{wu2020intelligent}, where each IRS element is assumed to be located sufficiently far from the transmitters/receivers (i.e., with the distance larger than several wavelengths, which usually holds in practice) and thus the far-field propagation condition is valid. Also note that without changing the IRS hardware or path-loss model, the resultant subsurface-wise cascaded channels can be obtained from the element-wise cascaded channels with proper aggregation, as shown in \cite{you2020wireless}.}}
In this paper, we assume the quasi-static flat-fading channel model for all channels involved, which remain approximately constant during each channel coherence interval.

Let ${{\bm u}}_{k}\triangleq\left[u_{k,1},\ldots, u_{k,M_1}\right]^T\in {\mathbb{C}^{M_1\times 1}}$, ${\tilde{\bm u}}_{k}\triangleq\left[{\tilde u}_{k,1},\ldots, {\tilde u}_{k,M_2}\right]^T\in {\mathbb{C}^{M_2\times 1}}$, ${{\bm D}}\triangleq\left[{\bm d}_{1},\ldots, {\bm d}_{M_1}\right] \in {\mathbb{C}^{M_2\times M_1 }}$, ${{\bm G}}_1 \in {\mathbb{C}^{N\times M_1 }}$,
and ${{\bm G}}_2 \in {\mathbb{C}^{N\times M_2 }}$ 
denote the baseband equivalent channels in the uplink for the user~$k$$\rightarrow$IRS~1, user~$k$$\rightarrow$IRS~2, IRS~1$\rightarrow$IRS~2, IRS~1$\rightarrow$BS, and IRS~2$\rightarrow$BS links, respectively, with $k=1,\ldots,K$. 
Let ${\bm \theta}_\mu\triangleq[{\theta_{\mu,1}},\ldots,{\theta_{\mu,M_\mu}}]^T=\left[ \beta_{\mu,1}e^{j \phi_{\mu,1}},  \ldots, \beta_{\mu,M_\mu} e^{j \phi_{\mu,M_\mu}}\right]^T\in {\mathbb{C}^{M_\mu\times 1}}$ denote the equivalent reflection coefficients of IRS $\mu$ with $\mu\in \{1,2\}$, where $\beta_{\mu,m} \in [0, 1]$ and $\phi_{\mu,m} \in [0, 2\pi)$ are
the reflection amplitude and phase shift of subsurface~$m$ at IRS $\mu$, respectively.
Thus, the effective channel from user~$k$ to the BS is the superimposition of the double-reflection link\footnote{\rev{Although there exists another double-reflection link over the user~$k$$\rightarrow$IRS~2$\rightarrow$IRS~1$\rightarrow$BS channel, it suffers from much higher path loss due to the much longer propagation distance (see Fig.~\ref{system}) and thus is ignored in this paper.}} and the two single-reflection links (see Fig.~\ref{system}), which is given by
\begin{align}
	{\bm h}_k=&{{\bm G}}_2 {\bm \Phi}_2 {{\bm D}} {\bm \Phi}_1 {{\bm u}}_{k} 
	+{{\bm G}}_2 {\bm \Phi}_2 {\tilde{\bm u}}_{k} 
	+ {{\bm G}}_1 {\bm \Phi}_1 {{\bm u}}_{k} \label{superposed0}
\end{align}
where ${\bm \Phi}_\mu={\rm diag} \left( {\bm \theta}_\mu \right)$ denotes the diagonal reflection matrix of IRS $\mu$ with $\mu\in \{1,2\}$.
With fully-passive IRSs 1 and 2, it is infeasible to acquire the CSI between the two IRSs as well as that with the BS/users separately. Nonetheless, it was shown in \cite{Zheng2020DoubleIRS} that the cascaded CSI (to be specified below) is sufficient for designing the cooperative reflection/passive beamforming at the two IRSs to maximize the data transmission rate without loss of optimality.
As such, let ${{\bm R}}_{k}={{\bm G}}_1 {\rm diag} \left( {{\bm u}}_{k} \right)\in {\mathbb{C}^{N\times M_1 }}$ (${\tilde{\bm R}}_{k}={{\bm G}}_2   {\rm diag} \left( {\tilde{\bm u}}_{k} \right)\in {\mathbb{C}^{N\times M_2 }}$) denote the cascaded user~$k$$\rightarrow$IRS~1 (IRS~2)$\rightarrow$BS channel (without taking the effect of IRS reflection yet) for the single-reflection link reflected by IRS~1 (IRS~2), and ${\tilde{\bm D}}_k\triangleq\left[{\tilde{\bm d}}_{k,1},\ldots, {\tilde{\bm d}}_{k,M_1}\right]={{\bm D}} {\rm diag} \left( {{\bm u}}_{k} \right)\in {\mathbb{C}^{M_2\times M_1 }}$ denote the cascaded user~$k$$\rightarrow$IRS~1$\rightarrow$ IRS~2 channel (without IRS reflection) with \rev{${\tilde{\bm d}}_{k,m}={\bm d}_{m} {{u}}_{k,m}\in {\mathbb{C}^{M_2\times 1 }}, \forall m=1,\ldots,M_1$.}
Then, the channel model in \eqref{superposed0} can be equivalently expressed as
\begin{align}
	{\bm h}_k
	&={{\bm G}}_2 {\bm \Phi}_2  {\tilde{\bm D}}_k  {\bm \theta}_1
	+ {\tilde{\bm R}}_{k} {\bm \theta}_2
	+ {{\bm R}}_{k}{\bm \theta}_1  \notag\\
	&={{\bm G}}_2 \left[{\bm \Phi}_2{\tilde{\bm d}}_{k,1},\ldots, {\bm \Phi}_2{\tilde{\bm d}}_{k,M_1} \right] {\bm \theta}_1+ {\tilde{\bm R}}_{k} {\bm \theta}_2
	+ {{\bm R}}_{k}{\bm \theta}_1 \notag\\
	&={{\bm G}}_2 \left[ {\rm diag} \left({\tilde{\bm d}}_{k,1}\right){\bm \theta}_2,\ldots, {\rm diag} \left({\tilde{\bm d}}_{k,M_1}\right){\bm \theta}_2 \right] {{\bm \theta}}_1\notag\\
	&~~~+ {\tilde{\bm R}}_{k}{\bm \theta}_2
	+ {{\bm R}}_{k}{\bm \theta}_1 \notag\\
	&=\sum\limits_{m=1}^{M_1}  \underbrace{{{\bm G}}_2 ~{\rm diag} \left({\tilde{\bm d}}_{k,m}\right)}_{{{\bm Q}}_{k,m}} {\bm \theta}_2 {\theta}_{1,m}
	+ {\tilde{\bm R}}_{k}{\bm \theta}_2
	+ {{\bm R}}_{k}{\bm \theta}_1
	\label{superposed3}
\end{align}
where ${{\bm Q}}_{k,m}\in {\mathbb{C}^{N\times M_2 }}$ denotes the cascaded user~$k$$\rightarrow$IRS~1$\rightarrow$IRS~2$\rightarrow$BS channel associated with subsurface $m$ of IRS~1, $\forall m=1,\ldots,M_1$ for the double-reflection link.
\rev{According to
\eqref{superposed3}, it is sufficient to acquire the knowledge of the cascaded channels $\left\{ {{\bm Q}}_{k,m} \right\}_{m=1}^{M_1}$, ${\tilde{\bm R}}_{k}$, and ${{\bm R}}_{k}$ for jointly designing the passive beamforming $\left\{{\bm \theta}_1, {\bm \theta}_2\right\}$ for uplink/downlink data transmission in the double-IRS aided multi-user MIMO system, which is applicable to, but not limited to, the max-min SINR problem considered in \cite{Zheng2020DoubleIRS}.}
However, the total number of channel coefficients in the cascaded channels $\left\{ {{\bm Q}}_{k,m} \right\}_{m=1}^{M_1}$, ${\tilde{\bm R}}_{k}$, and ${{\bm R}}_{k}$ is prohibitively large, which consists of two parts:
\begin{itemize}
	\item The number of channel coefficients (equal to $K \times NM_1M_2$) for the high-dimensional double-reflection link (i.e., $\left\{ {{\bm Q}}_{k,m} \right\}_{m=1}^{M_1}$), which are newly introduced due to the double IRSs.
	\item  The number of channel coefficients (equal to $K \times N(M_1+M_2)$) for the two single-reflection links (i.e., ${{\bm R}}_{k}$ and ${\tilde{\bm R}}_{k}$), which exist in the conventional single-IRS aided system (with either IRS 1 or IRS 2 present).
\end{itemize}
As such, 
the number of channel coefficients for the double-reflection link is of much higher-order than that for 
the two single-reflection links due to the fact that $M_1M_2\gg M_1+M_2$ given large $M_1$ and $M_2$ in practice, which makes the channel estimation problem more challenging for the double-IRS aided system, as compared to the single-IRS counterpart.
Note that given a limited channel coherence interval, such a considerably larger number of channel coefficients may require  significantly more training overhead that renders much less or even no time for data transmission, thus resulting in reduced achievable rate for the double-IRS aided system (despite the higher-order passive beamforming gain over the double-reflection link assuming perfect CSI as shown in \cite{Zheng2020DoubleIRS}).
Moreover, the cascaded CSI of the double-refection link needs to be efficiently estimated along with that of the two single-reflection links. 
However, the pilot signals from the users are intricately coupled over the single- and double-reflection links when both the two IRSs are turned ON for maximizing the reflected signal power, which thus calls for new designs for their joint channel estimation at the multi-antenna BS.

 To tackle the above challenges, we propose a new and efficient channel estimation scheme for the considered double-IRS aided system that achieves practically low training overhead and yet high channel estimation accuracy.\footnote{\rev{Note that the proposed channel estimation scheme is also applicable to the multi-IRS aided multiuser system considered in \cite{you2020deploy}, where a multi-antenna BS serves multiple users with the aid of two types of IRSs, which are deployed near the BS and distributed users, and thus can be treated as two equivalent (aggregated) IRSs (i.e., IRS 2 and IRS 1, respectively) of larger size accordingly.}}
 We first consider the single-user setup, i.e., $K=1$, to illustrate the main idea of the proposed channel estimation scheme and training reflection design in Section~\ref{SU_design_ON}, and then extend the results to the general multi-user case in Section~\ref{MU_design_ON}.  
 To reduce the hardware cost and maximize the signal reflection power, we consider the always-ON training reflection design for which all the reflecting elements/sub-surfaces at the two IRSs are turned ON during the entire channel training and their reflection amplitudes are set to the maximum value of one  
 (i.e., $\beta_{\mu,m}=1, \forall  m=1,\ldots,M_\mu,\mu\in \{1,2\} $), and thereby focus on the dynamically tuned reflection phase-shift design for achieving efficient joint channel estimation of the single- and double-reflection links.

\section{Double-IRS Channel Estimation for Single-User Case}\label{SU_design_ON} 
 In this section, we study the cascaded channel estimation and training design for the single-user case with $K=1$. For notational convenience, the user index $k$ is omitted in this section.

\subsection{Proposed Channel Estimation Scheme}\label{ProposedCE}
Note that with both IRSs~1 and 2 turned ON simultaneously, the double-reflection channel $\left\{ {{\bm Q}}_{m} \right\}_{m=1}^{M_1}$ and two single-reflection channels $\left\{{{\bm R}}, {\tilde{\bm R}}\right\}$ are superimposed in \eqref{superposed3}, which are difficult to be estimated separately at the BS as the received pilot signals from the user are intricately coupled over the single- and double-reflection links.
To tackle this difficulty, we propose an efficient channel estimation scheme, which includes two phases as elaborated below.

{\bf Phase I:} 
In this phase, we fix the reflection phase-shifts of IRS~1, i.e., ${\bm \theta}_{1,\rm I}^{(i)}={\bm \theta}_{1,\rm I}, \forall i=1,\ldots,I_1$ and dynamically tune the reflection phase-shifts of IRS~2, i.e., ${\bm \theta}_{2,\rm I}^{(i)}$, over different pilot symbols to facilitate the
cascaded channel estimation, where $I_1$ denotes the number of pilot symbols in Phase I.
For simplicity, we set ${\bm \theta}_{1,\rm I}={\bm 1}_{M_1\times 1}$ and thus the effective channel in \eqref{superposed3} during symbol~$i$ of Phase~I, denoted by ${\bm h}_{\rm I}^{(i)}$, is given by\footnote{Note that ${\bm \theta}_{1,\rm I}$ can be arbitrarily set in Phase I for our proposed channel estimation scheme without affecting its performance.}
\begin{align}\label{superposed5}
{\bm h}_{\rm I}^{(i)}
&=\sum_{m=1}^{M_1}  {{\bm Q}}_{m}{\bm \theta}_{2,\rm I}^{(i)}+{\tilde{\bm R}}{\bm \theta}_{2,\rm I}^{(i)}+ {{\bm G}}_1 {{\bm u}}\notag\\
&=\sum\limits_{m=1}^{M_1}  {{\bm G}}_2 ~{\rm diag} \left({\tilde{\bm d}}_{m}\right) {\bm \theta}_{2,\rm I}^{(i)}
+ {{\bm G}}_2 {\rm diag} \left( {\tilde{\bm u}} \right){\bm \theta}_{2,\rm I}^{(i)}
+ {{\bm G}}_1 {{\bm u}}\notag\\
&={{\bm G}}_2{\rm diag}\hspace{-0.1cm} \left(\hspace{-0.1cm}{\tilde{\bm u}}+\sum_{m=1}^{M_1}{\tilde{\bm d}}_{m}\hspace{-0.1cm} \right) \hspace{-0.1cm}{\bm \theta}_{2,\rm I}^{(i)}+{{\bm G}}_1 {{\bm u}}, i=1,\ldots,I_1.
\end{align}
For notational convenience, we let
\rev{${\bm g}_{1}\triangleq{{\bm G}}_1 {{\bm u}}\in {\mathbb{C}^{N\times 1 }}$, ${\bar{\bm d}}\triangleq {\tilde{\bm u}}+\sum_{m=1}^{M_1}{\tilde{\bm d}}_{m} \in {\mathbb{C}^{M_2\times 1 }}$}, and 
\begin{align}\label{Q}
{\bar{\bm Q}}\triangleq{{\bm G}}_2 ~{\rm diag} \left({\bar{\bm d}}\right)={\tilde{\bm R}}+\sum_{m=1}^{M_1}  {{\bm Q}}_{m}
\end{align}
where \rev{${\bar{\bm Q}}\in {\mathbb{C}^{N\times M_2 }}$} can be regarded as the superimposed CSI of the single- and double-reflection channels related to IRS~2. 
As such, the effective channel in \eqref{superposed5} can be simplified as
\begin{align}\label{superposed5.0}
{\bm h}_{\rm I}^{(i)}={\bar{\bm Q}}{\bm \theta}_{2,\rm I}^{(i)}+{\bm g}_{1}, \quad i=1,\ldots,I_1.
\end{align}


Based on the channel model in \eqref{superposed5.0} with
$x_{\rm I}^{(i)}=1$ being the pilot symbol transmitted by the (single) user, the received signal at the BS during symbol~$i$ of Phase I is expressed as
\begin{align}
{\bm z}_{\rm I}^{(i)}= {\bar{\bm Q}}~ {\bm \theta}_{2,\rm I}^{(i)} +{\bm g}_{1} +{\bm v}_{\rm I}^{(i)}, \quad i=1,\ldots,I_1
\end{align}
where ${\bm v}_{\rm I}^{(i)}\sim {\mathcal N_c }({\bm 0}, \sigma^2{\bm I}_N)$ is the additive white Gaussian noise (AWGN) vector with normalized noise power of $\sigma^2$ at the BS (with respect to user transmit power). 
By stacking the received
signal vectors $\left\{{\bm z}_{\rm I}^{(i)}\right\}_{i=1}^{I_1}$ into \rev{${\bm Z}_{\rm I}=\left[{\bm z}_{\rm I}^{(1)},{\bm z}_{\rm I}^{(2)},\ldots,{\bm z}_{\rm I}^{(I_1)}\right]\in {\mathbb{C}^{N\times I_1 }}$}, we obtain
\begin{align}\label{rec_z}
{\bm Z}_{\rm I}= \left[{\bm g}_{1},{\bar{\bm Q}}\right] \underbrace{\begin{bmatrix}
	1,&1,&\ldots,&1\\{\bm \theta}_{2,\rm I}^{(1)},&{\bm \theta}_{2,\rm I}^{(2)},&\ldots,&{\bm \theta}_{2,\rm I}^{(I_1)}
	\end{bmatrix}}_{{\bar{\bm \Theta}}_{2,\rm I}} +{\bm V}_{\rm I}.
\end{align}
where
${\bar{\bm \Theta}}_{2,\rm I}\in {\mathbb{C}^{(M_2+1)\times I_1 }}$
denotes the training reflection matrix of IRS~2 in Phase~I and \rev{${\bm V}_{\rm I}=\left[{\bm v}_{\rm I}^{(1)},{\bm v}_{\rm I}^{(2)},\ldots,{\bm v}_{\rm I}^{(I_1)}\right]\in {\mathbb{C}^{N\times I_1 }}$} is the corresponding AWGN matrix.
By properly constructing the training
reflection matrix of IRS~2 in Phase~I such that ${\rm rank}\left( {\bar{\bm \Theta}}_{2,\rm I} \right)=M_2+1$, the least-square (LS) estimates of ${\bm g}_{1}$ and ${\bar{\bm Q}}$ based on \eqref{rec_z} are given by
\begin{align}\label{est_Q2}
\left[{\hat{\bm g}_{1}}, {\hat{\bar{\bm Q}}} \right]={\bm Z}_{\rm I} {\bar{\bm \Theta}}_{2,\rm I}^{\dagger}=\left[{\bm g}_{1},{\bar{\bm Q}}\right]+{\bm V}_{\rm I}{\bar{\bm \Theta}}_{2,\rm I}^{\dagger}
\end{align}
where ${\bar{\bm \Theta}}_{2,\rm I}^{\dagger}={\bar{\bm \Theta}}_{2,\rm I}^H\left({\bar{\bm \Theta}}_{2,\rm I} {\bar{\bm \Theta}}_{2,\rm I}^H\right)^{-1}$.
Note that $I_1\ge M_2+1$ is
required to ensure ${\rm rank}\left( {\bar{\bm \Theta}}_{2,\rm I} \right)=M_2+1$ and thus
the existence of ${\bar{\bm \Theta}}_{2,\rm I}^{\dagger}$.
\begin{lemma}
	With the superimposed CSI ${\bar{\bm Q}}$ available after Phase I, the channel model in \eqref{superposed3} for the single-user case can be simplified as
	\begin{align}\label{superposed6}
	{\bm h}=
	\left[ {\bar{\bm Q}} {\rm diag} \left({\bm \theta}_2\right) {{\bm E}},~{{\bm R}}\right]
	{\vec{\bm \theta}}_1
	\end{align}
	where \rev{${\vec{\bm \theta}}_1\triangleq\left[1,{\bm \theta}_1^T,{\bm \theta}_1^T\right]^T \in {\mathbb{C}^{(2M_1+1)\times 1 }}$} and ${{\bm E}}\triangleq\left[{{\bm e}}_{0}, {{\bm e}}_{1},\ldots, {{\bm e}}_{M_1}\right]\in {\mathbb{C}^{M_2\times (M_1+1) }}$ is the scaling matrix that collects all the (lower-dimensional) scaling vectors $\left\{{{\bm e}}_{m}\right\}_{m=0}^{M_1}$ with 
	${{\bm e}}_{0} \triangleq{\rm diag} \left({\bar{\bm d}}\right)^{-1}{\tilde{\bm u}}\in {\mathbb{C}^{M_2\times 1}}$ and
	${{\bm e}}_{m} \triangleq{\rm diag} \left({\bar{\bm d}}\right)^{-1}{\tilde{\bm d}}_{m}\in {\mathbb{C}^{M_2\times 1}}, \forall m=1,\ldots,M_1$.
\end{lemma}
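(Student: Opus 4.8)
The plan is to verify the claimed factorization \eqref{superposed6} by reducing its right-hand side back to the already-established model \eqref{superposed3}, i.e., to show that each cascaded component in \eqref{superposed3} is a low-dimensional scaled version of the common superimposed CSI ${\bar{\bm Q}}$. The crucial structural observation is that ${\bar{\bm Q}}$, ${\tilde{\bm R}}$, and every ${{\bm Q}}_m$ share the same left factor ${{\bm G}}_2$ and differ only through a diagonal matrix on the right: ${\bar{\bm Q}}={{\bm G}}_2{\rm diag}({\bar{\bm d}})$, ${\tilde{\bm R}}={{\bm G}}_2{\rm diag}({\tilde{\bm u}})$, and ${{\bm Q}}_m={{\bm G}}_2{\rm diag}({\tilde{\bm d}}_m)$. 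Since ${\bar{\bm d}}={\tilde{\bm u}}+\sum_{m}{\tilde{\bm d}}_m$ has (generically) no zero entries, so that ${\rm diag}({\bar{\bm d}})^{-1}$ exists, the scaling vectors ${{\bm e}}_0={\rm diag}({\bar{\bm d}})^{-1}{\tilde{\bm u}}$ and ${{\bm e}}_m={\rm diag}({\bar{\bm d}})^{-1}{\tilde{\bm d}}_m$ satisfy the elementary identities ${\rm diag}({\bar{\bm d}}){{\bm e}}_0={\tilde{\bm u}}$ and ${\rm diag}({\bar{\bm d}}){{\bm e}}_m={\tilde{\bm d}}_m$.

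With these in hand, the core step is a short chain of diagonal-matrix manipulations. For each $m$, I would compute
\begin{align}
{\bar{\bm Q}}\,{\rm diag}({\bm \theta}_2)\,{{\bm e}}_m
&={{\bm G}}_2\,{\rm diag}({\bar{\bm d}})\,{\rm diag}({\bm \theta}_2)\,{{\bm e}}_m \notag\\
&={{\bm G}}_2\,{\rm diag}({\bm \theta}_2)\,{\rm diag}({\bar{\bm d}})\,{{\bm e}}_m \notag\\
&={{\bm G}}_2\,{\rm diag}({\bm \theta}_2)\,{\tilde{\bm d}}_m \notag\\
&={{\bm G}}_2\,{\rm diag}({\tilde{\bm d}}_m)\,{\bm \theta}_2={{\bm Q}}_m{\bm \theta}_2,
\end{align}
where I use the commutativity of diagonal matrices in the second line and the swap identity ${\rm diag}({\bm a}){\bm b}={\rm diag}({\bm b}){\bm a}$ in the fourth. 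The identical argument with ${{\bm e}}_0$ in place of ${{\bm e}}_m$ yields ${\bar{\bm Q}}{\rm diag}({\bm \theta}_2){{\bm e}}_0={\tilde{\bm R}}{\bm \theta}_2$.

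Then I would assemble the pieces. Expanding ${{\bm E}}\,[1,{\bm \theta}_1^T]^T={{\bm e}}_0+\sum_{m=1}^{M_1}{{\bm e}}_m\theta_{1,m}$ and applying the two identities above term by term gives ${\bar{\bm Q}}{\rm diag}({\bm \theta}_2){{\bm E}}\,[1,{\bm \theta}_1^T]^T={\tilde{\bm R}}{\bm \theta}_2+\sum_{m=1}^{M_1}{{\bm Q}}_m{\bm \theta}_2\,\theta_{1,m}$, which is exactly the first two terms of \eqref{superposed3}. The remaining single-reflection term ${{\bm R}}{\bm \theta}_1$ is simply the product of the trailing block ${{\bm R}}$ with the last $M_1$ entries of ${\vec{\bm \theta}}_1=[1,{\bm \theta}_1^T,{\bm \theta}_1^T]^T$, so the partitioned matrix--vector product in \eqref{superposed6} reproduces \eqref{superposed3} in full.

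There is no genuine analytical difficulty here, as the result is an algebraic identity. The only points warranting care are the bookkeeping of the block structure --- ensuring the duplicated ${\bm \theta}_1$ in ${\vec{\bm \theta}}_1$ feeds the double-reflection block through ${{\bm E}}$ while the separate copy drives the single-reflection block ${{\bm R}}$ --- and the implicit invertibility of ${\rm diag}({\bar{\bm d}})$, which holds generically since ${\bar{\bm d}}$ collects nonzero aggregated channel gains. The main \emph{obstacle}, such as it is, lies in recognizing the right factorization in the first place: dividing out the common ${\bar{\bm d}}$ gains converts every single- and double-reflection coefficient into a low-dimensional scaling of the shared ${\bar{\bm Q}}$, and once this is seen the verification is immediate.
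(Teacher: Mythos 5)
Your proof is correct and follows essentially the same route as the paper: both hinge on the shared ${{\bm G}}_2$ factor, the scaling identities ${\tilde{\bm R}}={\bar{\bm Q}}\,{\rm diag}({{\bm e}}_0)$ and ${{\bm Q}}_m={\bar{\bm Q}}\,{\rm diag}({{\bm e}}_m)$, and the same diagonal-matrix swap ${\rm diag}({\bm a}){\bm b}={\rm diag}({\bm b}){\bm a}$ to pull out ${\rm diag}({\bm \theta}_2)$ and collect terms into the block form \eqref{superposed6}. The only differences are cosmetic --- you verify the identity by expanding \eqref{superposed6} back to \eqref{superposed3} rather than substituting forward, and you make explicit the (generic) invertibility of ${\rm diag}({\bar{\bm d}})$ that the paper leaves implicit.
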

\begin{proof}
Recall from \eqref{superposed3} that ${\tilde{\bm R}}$ and $\left\{{{\bm Q}}_{m} \right\}_{m=1}^{M_1}$ are respectively expressed as
\begin{align}\label{cascaded_Qk}
 {\tilde{\bm R}}={{\bm G}}_2   {\rm diag} \left( {\tilde{\bm u}} \right), \quad
{{\bm Q}}_{m}={{\bm G}}_2 ~{\rm diag} \left({\tilde{\bm d}}_{m}\right)
\end{align}
with $m=1,\ldots, M_1$.
It is observed that ${\tilde{\bm R}}$ and $\left\{ {{\bm Q}}_{m} \right\}_{m=1}^{M_1}$ in \eqref{cascaded_Qk} share the same (common) IRS~2$\rightarrow$BS link (i.e, ${{\bm G}}_2$) as the superimposed CSI ${\bar{\bm Q}}$ in \eqref{Q}.
As such,
if taking the superimposed CSI ${\bar{\bm Q}}$ in \eqref{Q} as the reference CSI, we can re-express \eqref{cascaded_Qk} as
\begin{align}
{\tilde{\bm R}}
&=\underbrace{ {{\bm G}}_2 ~{\rm diag} \left({\bar{\bm d}}\right) }_{{\bar{\bm Q}} } \cdot \underbrace{ {\rm diag} \left({\bar{\bm d}}\right)^{-1}{\rm diag} \left({\tilde{\bm u}}\right)}_{{\rm diag} \left({{\bm e}}_{0}\right)}\label{cascaded_Rk2}\\
{{\bm Q}}_{m}
&=\underbrace{ {{\bm G}}_2 ~{\rm diag} \left({\bar{\bm d}}\right) }_{{\bar{\bm Q}} } \cdot \underbrace{ {\rm diag} \left({\bar{\bm d}}\right)^{-1}{\rm diag} \left({\tilde{\bm d}}_{m}\right)}_{{\rm diag} \left({{\bm e}}_{m}\right)} \label{cascaded_Qk2}
\end{align}
with $m=1,\ldots, M_1$,
where ${{\bm e}}_{0} \triangleq{\rm diag} \left({\bar{\bm d}}\right)^{-1}{\tilde{\bm u}}\in {\mathbb{C}^{M_2\times 1}}$ and ${{\bm e}}_{m} \triangleq{\rm diag} \left({\bar{\bm d}}\right)^{-1}{\tilde{\bm d}}_{m}\in {\mathbb{C}^{M_2\times 1}}, \forall m=1,\ldots,M_1$ denote the scaling vectors normalized by ${\bar{\bm d}}$. As compared to the cascaded channel matrices ${\tilde{\bm R}}$ and $\left\{{{\bm Q}}_{m} \right\}_{m=1}^{M_1}$, the scaling vectors $\left\{{{\bm e}}_{m}\right\}_{m=0}^{M_1}$ are of lower dimension with ${\bar{\bm Q}}$ being the reference CSI.
By substituting ${\tilde{\bm R}}$ and ${{\bm Q}}_{m}$ of \eqref{cascaded_Rk2} and \eqref{cascaded_Qk2} into \eqref{superposed3}, the channel model in \eqref{superposed3} for the single user case can be equivalently expressed as
\begin{align}\label{superposed5.1}
{\bm h}&=\sum\limits_{m=1}^{M_1}  {\bar{\bm Q}} {\rm diag} \left({{\bm e}}_{m}\right)
 {\bm \theta}_2 {\theta}_{1,m}
+ {\bar{\bm Q}}{\rm diag} \left({{\bm e}}_{0}\right)
{\bm \theta}_2
+ {{\bm R}}{\bm \theta}_1\notag\\
&={\bar{\bm Q}} {\rm diag} \left({\bm \theta}_2\right) \underbrace{\left[{{\bm e}}_{0}, {{\bm e}}_{1},\ldots, {{\bm e}}_{M_1} \right]}_{{{\bm E}}}
\begin{bmatrix}
1\\{\bm \theta}_1
\end{bmatrix}
+{{\bm R}}{\bm \theta}_1.
\end{align}
\rev{Let ${\vec{\bm \theta}}_1\triangleq\left[1,{\bm \theta}_1^T,{\bm \theta}_1^T\right]^T\in {\mathbb{C}^{(2M_1+1)\times 1 }}$ for notational simplicity} and after some simple transformations,
the channel model in \eqref{superposed5.1} can be further expressed in a more compact form as in \eqref{superposed6}, thus completing the proof.
\end{proof}

According to Lemma 1, it is sufficient to acquire the CSI of $\left\{{\bar{\bm Q}}, {{\bm E}},{{\bm R}}\right\}$ for our considered cascaded channel estimation. 
Moreover, it is worth pointing out that as compared to \eqref{superposed3} with totally $N(M_1+M_2)+NM_1M_2$ channel coefficients in $\left\{\left\{ {{\bm Q}}_{m} \right\}_{m=1}^{M_1}, {\tilde{\bm R}}, {{\bm R}}\right\}$ under the single-user setup with $K=1$,
the number of channel coefficients in $\left\{{\bar{\bm Q}}, {{\bm E}},{{\bm R}}\right\}$ based on \eqref{superposed6} is substantially reduced to $N(M_1+M_2)+M_2(M_1+1)$ when $N\gg 1$ in practice.
Moreover, with the superimposed CSI ${\bar{\bm Q}}$ estimated in Phase I according to \eqref{est_Q2}, we only need to further estimate $\left\{{{\bm E}},{{\bm R}}\right\}$ in the subsequent Phase II, as elaborated in the next.

{\bf Phase II (Estimation of $\left\{{{\bm E}},{{\bm R}}\right\}$):} In this phase, we dynamically tune the reflection phase-shifts of the two IRSs, i.e.,
$\left\{{\bm \theta}_{1,\rm II}^{(i)}, {\bm \theta}_{2,\rm II}^{(i)}\right\}$, over different pilot symbols to facilitate the joint estimation of $\left\{{{\bm E}},{{\bm R}}\right\}$, where the number of pilot symbols in Phase II is denoted by $I_2$.
As such, the effective channel in \eqref{superposed6} during symbol~$i$ of Phase~II, denoted by ${\bm h}_{\rm II}^{(i)}$, is given by
\begin{align}\label{superposed6.0}
{\bm h}_{\rm II}^{(i)}=
\left[ {\bar{\bm Q}} {\rm diag} \left({\bm \theta}_{2,\rm II}^{(i)}\right) {{\bm E}},~{{\bm R}}\right]
{\vec{\bm \theta}}_{1,\rm II}^{(i)},  \quad i=1,\ldots,I_2
\end{align}
where ${\vec{\bm \theta}}_{1,\rm II}^{(i)}\triangleq\left[1,({\bm \theta}_{1,\rm II}^{(i)})^T,({\bm \theta}_{1,\rm II}^{(i)})^T\right]^T$.

Based on the channel model in \eqref{superposed6.0} with $x_{\rm II}^{(i)}=1$ being the pilot symbol transmitted by the (single) user, the received signal at the BS during symbol $i$ of Phase II is given by
\begin{align}\label{rec_z1}
{\bm z}_{\rm II}^{(i)}=
\left[ {\bar{\bm Q}} {\rm diag} \left({\bm \theta}_{2,\rm II}^{(i)}\right) {{\bm E}},~{{\bm R}}\right]
{\vec{\bm \theta}}_{1,\rm II}^{(i)}
+{\bm v}_{\rm II}^{(i)}
\end{align}
where ${\bm v}_{\rm II}^{(i)}\sim {\mathcal N_c }({\bm 0}, \sigma^2{\bm I}_N)$ is the AWGN vector at the BS. \rev{Note that ${\bar{\bm Q}}$ can be regarded as the spatial observation matrix for ${{\bm E}}$ in \eqref{rec_z1} and its column rank will affect the training reflection design and the corresponding minimum training overhead of Phase II.
As such, for the joint estimation of $\left\{{{\bm E}},{{\bm R}}\right\}$, we consider the following two cases depending on whether ${\bar{\bm Q}}$ is of full column rank or not.}
\subsubsection{Case 1}\label{case1} $N\ge M_2$. In this case, we design the time-varying reflection phase-shifts of IRS~2 as ${\bm \theta}_{2,\rm II}^{(i)}=\psi_{\rm II}^{(i)}{\bm \theta}_{2,\rm II}$, where $\psi_{\rm II}^{(i)}$ with $|\psi_{\rm II}^{(i)}|=1$ denotes the time-varying common phase shift that is applied to all the reflecting elements/subsurfaces of IRS~2 (given the initial reflection phase-shifts of IRS~2  in Phase II as ${\bm \theta}_{2,\rm II}$).
For simplicity, we set ${\bm \theta}_{2,\rm II}= {\bm 1}_{M_2\times 1}$ and the resultant received signal in \eqref{rec_z1} reduces to\footnote{Note that ${\bm \theta}_{2,\rm II}$ can be arbitrarily set in Phase II for our proposed channel estimation scheme in the case of $N\ge M_2$ without affecting its performance.}
\begin{align}\label{rec_z.1}
{\bm z}_{\rm II}^{(i)}&=
\left[ \psi_{\rm II}^{(i)} {\bar{\bm Q}} {{\bm E}},~{{\bm R}}\right]
{\vec{\bm \theta}}_{1,\rm II}^{(i)}
+{\bm v}_{\rm II}^{(i)}\notag\\
&=
\underbrace{\left[ {\bar{\bm Q}}{{\bm E}},~{{\bm R}}\right]}_{{{\bm F}}}
\begin{bmatrix}
\psi_{\rm II}^{(i)}\\\psi_{\rm II}^{(i)}{\bm \theta}_{1,\rm II}^{(i)}\\{\bm \theta}_{1,\rm II}^{(i)}
\end{bmatrix}
+{\bm v}_{\rm II}^{(i)}
\end{align}
where ${{\bm F}}\in {\mathbb{C}^{N\times (2M_1+1) }}$ denotes the composite CSI of $\left\{{\bar{\bm Q}},{{\bm E}},{{\bm R}}\right\}$.
By stacking the received
signal vectors $\left\{{\bm z}_{\rm II}^{(i)}\right\}$ over $I_2$ pilot symbols as \rev{${\bm Z}_{\rm II}=\left[{\bm z}_{\rm II}^{(1)},{\bm z}_{\rm II}^{(2)},\ldots,{\bm z}_{\rm II}^{(I_2)}\right]\in {\mathbb{C}^{N\times I_2 }}$}, we obtain
\begin{align}\label{rec_z2}
{\bm Z}_{\rm II}= {{\bm F}}  
\underbrace{\begin{bmatrix}
\psi_{\rm II}^{(1)},&\psi_{\rm II}^{(2)},&\ldots,&\psi_{\rm II}^{(I_2)}\\
\psi_{\rm II}^{(1)}{\bm \theta}_{1,\rm II}^{(1)},&\psi_{\rm II}^{(2)}{\bm \theta}_{1,\rm II}^{(2)},&\ldots,&\psi_{\rm II}^{(I_2)}{\bm \theta}_{1,\rm II}^{(I_2)}\\
{\bm \theta}_{1,\rm II}^{(1)},&{\bm \theta}_{1,\rm II}^{(2)},&\ldots,&{\bm \theta}_{1,\rm II}^{(I_2)}
\end{bmatrix}}_{{\bm \Omega}_{\rm II}}
+ {\bm V}_{\rm II} 
\end{align}
where ${\bm \Omega}_{\rm II}\in {\mathbb{C}^{(2M_1+1)\times I_2 }}$ denotes the joint training reflection matrix of
 the phase shifts $\left\{{\bm \theta}_{1,\rm II}^{(i)}\right\}_{i=1}^{I_2}$ at IRS~1 and the (common) phase shifts $\left\{\psi_{\rm II}^{(i)}\right\}_{i=1}^{I_2}$ at IRS~2, and \rev{${\bm V}_{\rm II}=\left[{\bm v}_{\rm II}^{(1)},{\bm v}_{\rm II}^{(2)},\ldots,{\bm v}_{\rm II}^{(I_2)}\right]\in {\mathbb{C}^{N\times I_2 }}$} is the corresponding AWGN matrix in Phase II.
By properly constructing the joint training reflection matrix ${\bm \Omega}_{\rm II}$ such that ${\rm rank}\left( {\bm \Omega}_{\rm II} \right)=2M_1+1$, the LS estimate of ${{\bm F}} $ based on \eqref{rec_z2} is given by
\begin{align}\label{est_F1}
{\hat{{\bm F}}} = {\bm Z}_{\rm II} {\bm \Omega}_{\rm II}^{\dagger}={{\bm F}}
+ {\bm V}_{\rm II} {\bm \Omega}_{\rm II}^{\dagger}
\end{align}
where ${\bm \Omega}_{\rm II}^{\dagger}={\bm \Omega}_{\rm II}^H\left({\bm \Omega}_{\rm II} {\bm \Omega}_{\rm II}^H\right)^{-1}$.
Note that $I_2\ge 2M_1+1$ is required to ensure ${\rm rank}\left( {\bm \Omega}_{\rm II} \right)=2M_1+1$ and thus the existence of ${\bm \Omega}_{\rm II}^{\dagger}$.
With the CSI of ${\bar{\bm Q}}$ and ${{\bm F}} $ estimated in \eqref{est_Q2} and \eqref{est_F1}, respectively, the LS estimates of ${{\bm E}}$ and ${{\bm R}}$ are respectively given by
\begin{align}
{\hat{{\bm E}}}&=\left( {\hat{\bar{\bm Q}}^H} {\hat{\bar{\bm Q}}}\right)^{-1}{\hat{\bar{\bm Q}}^H}\left[{\hat{{\bm F}}}\right]_{:,1:M_1+1} \label{est_Ek}\\
 {\hat{{\bm R}}}&=\left[{\hat{{\bm F}}}\right]_{:,M_1+2:2M_1+1}.\label{est_R}
\end{align}

With the CSI of $\left\{{\bar{\bm Q}},{{\bm E}},{{\bm R}}\right\}$ estimated in \eqref{est_Q2} and \eqref{est_F1}-\eqref{est_R} for the case of $N\ge M_2$, we can then recover the estimated CSI of the double-reflection channels $\left\{{{\bm Q}}_{m}\right\}_{m=1}^{M_1}$ and the single-reflection channel ${\tilde{\bm R}}$ according to \eqref{cascaded_Rk2} and \eqref{cascaded_Qk2} in the proof of Lemma 1.
As such, accounting for both Phases I and II, 
the minimum training overhead in terms of number of pilot symbols required is $2M_1+M_2+2$ for the proposed channel estimation scheme in the case of $N\ge M_2$ under the single-user setup. 

\subsubsection{Case 2} $N< M_2$. Recall from \eqref{rec_z.1} that $\left[{{\bm F}}\right]_{:,1:M_1+1}={\bar{\bm Q}}{{\bm E}}$.
In this case, since ${\bar{\bm Q}}$ is (column) rank-deficient, i.e., ${\rm rank}\left({\bar{\bm Q}}\right)=N< M_2$,
we cannot estimate ${{\bm E}}$ according to \eqref{est_Ek}.
As such, we propose to simultaneously tune the training reflection phase-shifts of the two IRSs over time to jointly estimate $\left\{{{\bm E}},{{\bm R}}\right\}$ for the case of $N< M_2$. 
Specifically, 
the received signal at the BS during symbol $i$ of Phase~II in \eqref{rec_z1} can be re-expressed as
\begin{align}\label{rec_z3}
{\bm z}_{\rm II}^{(i)}=& {\bar{\bm Q}} {\rm diag} \left({\bm \theta}_{2,\rm II}^{(i)}\right) {{\bm E}} {\tilde{\bm \theta}}_{1,\rm II}^{(i)}+{{\bm R}} {\bm \theta}_{1,\rm II}^{(i)}
+{\bm v}_{\rm II}^{(i)}\notag\\
\stackrel{(a)}{=}&\left(({\tilde{\bm \theta}}_{1,\rm II}^{(i)})^T\otimes {\bar{\bm Q}} {\rm diag} \left({\bm \theta}_{2,\rm II}^{(i)}\right) \right){\rm vec}({{\bm E}})\notag\\
&+\left(({\bm \theta}_{1,\rm II}^{(i)})^T\otimes {\bm I}_{N} \right) {\rm vec}({{\bm R}})+{\bm v}_{\rm II}^{(i)}
\notag\\
=&\left[({\tilde{\bm \theta}}_{1,\rm II}^{(i)})^T\otimes {\bar{\bm Q}} {\bm \Psi}_{\rm II}^{(i)} ,~
({\bm \theta}_{1,\rm II}^{(i)})^T\otimes {\bm I}_{N}  \right]
\begin{bmatrix}
{\rm vec}({{\bm E}}) \\{\rm vec}({{\bm R}})
\end{bmatrix}+{\bm v}_{\rm II}^{(i)}
\end{align}
where $(a)$ is obtained according to the property of Kronecker product for vectorization, ${\tilde{\bm \theta}}_{1,\rm II}^{(i)}\triangleq \big[1, ({\bm \theta}_{1,\rm II}^{(i)})^T\big]^T$, and ${\bm \Psi}_{\rm II}^{(i)}\triangleq{\rm diag} \left({\bm \theta}_{2,\rm II}^{(i)}\right)$. 
By stacking the received
signal vectors $\left\{{\bm z}_{\rm II}^{(i)}\right\}$ over $I_2$ pilot symbols  as \rev{${\bm z}_{\rm II}=\left[({\bm z}_{\rm II}^{(1)})^T,({\bm z}_{\rm II}^{(2)})^T,\ldots,({\bm z}_{\rm II}^{(I_2)})^T\right]^T\in {\mathbb{C}^{I_2 N  \times1}}$}, we obtain
\begin{align}\label{rec_z4}
\hspace{-0.2cm}{\bm z}_{\rm II}=\hspace{-0.15cm}
\underbrace{\begin{bmatrix}
	({\tilde{\bm \theta}}_{1,\rm II}^{(1)})^T\otimes {\bar{\bm Q}} {\bm \Psi}_{\rm II}^{(i)},
	&\left(({\bm \theta}_{1,\rm II}^{(1)})^T\otimes {\bm I}_{N} \right)\\
	\vdots&\vdots\\
	({\tilde{\bm \theta}}_{1,\rm II}^{(I_2)})^T\otimes {\bar{\bm Q}} {\bm \Psi}_{\rm II}^{(i)} ,
	&\left(({\bm \theta}_{1,\rm II}^{(I_2)})^T\otimes {\bm I}_{N} \right)
	\end{bmatrix}}_{{\bm \Xi}\in {\mathbb{C}^{ I_2N  \times (M_2+M_1M_2+NM_1) }}}\hspace{-0.2cm}
\begin{bmatrix}
{\rm vec}({{\bm E}}) \\{\rm vec}({{\bm R}})
\end{bmatrix}
+ {\bm v}_{\rm II}
\end{align}
where \rev{${\bm v}_{\rm II}=\left[({\bm v}_{\rm II}^{(1)})^T,({\bm v}_{\rm II}^{(2)})^T,\ldots,({\bm v}_{\rm II}^{(I_2)})^T\right]^T\in {\mathbb{C}^{I_2 N  \times1}}$} is the corresponding AWGN vector in Phase II.
Accordingly, if we design the training
reflection phase-shifts $\left\{{\bm \theta}_{1,\rm II}^{(i)},{\bm \theta}_{2,\rm II}^{(i)}\right\}_{i=1}^{I_2}$ at the two IRSs properly 
such that ${\rm rank}\left( {\bm \Xi} \right)=M_2+M_1M_2+NM_1$, the LS estimates of ${{\bm E}}$ and ${{\bm R}}$ based on \eqref{rec_z4} can be obtained as 
\begin{align}\label{est_Ek2}
\begin{bmatrix}
{\rm vec}({\hat{\bm E}}) \\{\rm vec}({\hat{\bm R}})
\end{bmatrix}
={\bm \Xi}^{\dagger}{\bm z}_{\rm II} =\begin{bmatrix}
{\rm vec}({{\bm E}}) \\{\rm vec}({{\bm R}})
\end{bmatrix} + {\bm \Xi}^{\dagger}{\bm v}_{\rm II}
\end{align}
where ${\bm \Xi}^{\dagger}=\left( {\bm \Xi}^H {\bm \Xi}\right)^{-1}{{\bm \Xi}^H}$.
Note that since $I_2 N \ge M_2+M_1M_2+NM_1 $ is required to ensure ${\rm rank}\left( {\bm \Xi} \right)=M_2+M_1M_2+NM_1$ with $I_2$ being an integer, we have $I_2 \ge \left\lceil\frac{M_2+M_1M_2+NM_1}{N}\right\rceil=\left\lceil\frac{(M_1+1)M_2}{N}\right\rceil+M_1$.

Similar to the case of $N\ge M_2$, after estimating the CSI of $\left\{{\bar{\bm Q}},{{\bm E}},{{\bm R}}\right\}$ based on \eqref{est_Q2} and \eqref{est_Ek2} for the case of $N< M_2$, we can also
obtain the estimated CSI of $\left\{{{\bm Q}}_{m}\right\}_{m=1}^{M_1}$ and ${\tilde{\bm R}}$ according to \eqref{cascaded_Rk2} and \eqref{cascaded_Qk2} in the proof of Lemma 1.
As such, 
the corresponding minimum training overhead in terms of number of pilot symbols is $\left\lceil\frac{(M_1+1)M_2}{N}\right\rceil+M_1+M_2+1$ over Phases I and II for the case of $N< M_2$ under the single-user setup.
Finally, it is worth pointing out that although the LS channel estimation based on \eqref{rec_z4}-\eqref{est_Ek2} for the case of $N< M_2$ can also be applied to the case of $N\ge M_2$, it is generally less efficient due to the more reflection phase-shifts of the two IRSs required during Phase~II as well as the higher complexity of the larger-size matrix inversion operation for ${\bm \Xi}^{\dagger}$, as compared to the case of $N\ge M_2$ in Section~\ref{case1}.

\subsection{Training Reflection Phase-Shift Design}\label{training_design}
In this subsection, we optimize the training reflection phase-shift design to minimize the channel estimation error over the two training phases proposed in Section \ref{ProposedCE}.

{\bf Phase I:}  The mean squared error (MSE) of the LS channel estimation in \eqref{est_Q2} is given by
\begin{align}\label{MSE_I}
\varepsilon_{\rm I}&=\frac{1}{N(M_2+1)} {\mathbb E}\left\{  \left\|\left[{\hat{\bm g}_{1}}, {\hat{\bar{\bm Q}}} \right]
-\left[{\bm g}_{1},{\bar{\bm Q}}\right]\right\|^{2}_F
\right\}\notag\\
&=\frac{1}{N(M_2+1)}  {\mathbb E}\left\{ \Big\| {\bm V}_{\rm I} {\bar{\bm \Theta}}_{2,\rm I}^{\dagger} \Big\|^{2}_F\right\}\notag\\
&=\frac{1}{N(M_2+1)}  \text{tr}\left\{ \left({\bar{\bm \Theta}}_{2,\rm I}^{\dagger} \right)^H 
{\mathbb E}\left\{ {\bm V}_{\rm I}^H  {\bm V}_{\rm I}  \right\}
{\bar{\bm \Theta}}_{2,\rm I}^{\dagger} 	\right\}\notag\\
&=\frac{\sigma^2}{M_2+1} 
\text{tr}\left\{ \left( {\bar{\bm \Theta}}_{2,\rm I}{\bar{\bm \Theta}}_{2,\rm I}^H \right)^{-1}	\right\}
\end{align}
where ${\mathbb E}\left\{ {\bm V}_{\rm I}^H  {\bm V}_{\rm I}  \right\} =\sigma^2 N{\bm I}_{I_1}$. 
Similar to \cite{zheng2019intelligent,zheng2020intelligent}, the MSE in \eqref{MSE_I} can be minimized if and only if ${\bar{\bm \Theta}}_{2,\rm I}{\bar{\bm \Theta}}_{2,\rm I}^H=I_1 {\bm I}_{M_2+1}$.
Accordingly, we can design the training reflection matrix ${\bar{\bm \Theta}}_{2,\rm I}$ of IRS~2 in Phase I as e.g., the submatrix of the $I_1 \times I_1$ 
discrete Fourier transform (DFT) matrix with its first $M_2+1$ rows and thus achieve the minimum MSE as $\varepsilon_{\rm I}^{\rm min}=\frac{\sigma^2}{I_1} $ for Phase~I.

{\bf Phase II:} Next, we jointly optimize the training reflection phase-shifts of the two IRSs in Phase~II to minimize the MSE, which is divided into the following two cases for $N\ge M_2$ and $N< M_2$, respectively, as in Section~\ref{ProposedCE}.

\subsubsection{Case 1} $N\ge M_2$. The MSE of the LS channel estimation in \eqref{est_F1} is given by
\begin{align}\label{MSE_II}
\varepsilon_{\rm II}&=\frac{1}{N(2M_1+1)} {\mathbb E}\left\{  \left\| {\hat{{\bm F}}} 
-{{\bm F}}\right\|^{2}_F
\right\}\notag\\
&=\frac{1}{N(2M_1+1)}  {\mathbb E}\left\{ \Big\| {\bm V}_{\rm II} {\bm \Omega}_{\rm II}^{\dagger} \Big\|^{2}_F\right\}\notag\\
&=\frac{1}{N(2M_1+1)}  \text{tr}\left\{ \left( {\bm \Omega}_{\rm II}^{\dagger} \right)^H 
{\mathbb E}\left\{ {\bm V}_{\rm II}^H  {\bm V}_{\rm II} \right\}
{\bm \Omega}_{\rm II}^{\dagger} 	\right\}\notag\\
&=\frac{\sigma^2}{2M_1+1} 
\text{tr}\left\{ \left( {\bm \Omega}_{\rm II} {\bm \Omega}_{\rm II}^H \right)^{-1}	\right\}
\end{align}
where ${\mathbb E}\left\{ {\bm V}_{\rm II}^H  {\bm V}_{\rm II}  \right\} =\sigma^2 N{\bm I}_{I_2}$.
Similarly, the MSE in \eqref{MSE_II} can be minimized if and only if ${\bm \Omega}_{\rm II} {\bm \Omega}_{\rm II}^H=I_2 {\bm I}_{2M_1+1}$. However, recall from \eqref{rec_z2} that
${\bm \Omega}_{\rm II}$ is the joint training reflection matrix of
the phase-shifts $\left\{{\bm \theta}_{1,\rm II}^{(i)}\right\}_{i=1}^{I_2}$ at IRS~1 and the (common) phase shifts $\left\{\psi_{\rm II}^{(i)}\right\}_{i=1}^{I_2}$ at IRS~2, whose design is more involved as compared to that of ${\bar{\bm \Theta}}_{2,\rm I}$ in Phase~I.
For the purpose of exposition, \rev{let ${\bm \psi}_{\rm II}^H=\left[\psi_{\rm II}^{(1)},\ldots,\psi_{\rm II}^{(I_2)}\right]\in {\mathbb{C}^{1  \times I_2}}$} denote the (common) phase-shift vector at IRS~2 and \rev{${\bm \Theta}_{1,\rm II}=\left[{\bm \theta}_{1,\rm II}^{(1)},\ldots,{\bm \theta}_{1,\rm II}^{(I_2)}\right] \in {\mathbb{C}^{M_1  \times I_2}}$} denote the reflection phase-shift matrix at IRS~1 in Phase~II. By substituting ${\bm \psi}_{\rm II}^H$ and ${\bm \Theta}_{1,\rm II}$ into the joint training reflection matrix ${\bm \Omega}_{\rm II}$ in \eqref{rec_z2}, we have
\begin{align}
\hspace{-0.15cm}&{\bm \Omega}_{\rm II} {\bm \Omega}_{\rm II}^H=
\begin{bmatrix}
{\bm \psi}_{\rm II}^H  \\{\bm \Theta}_{1,\rm II} {\rm diag}({\bm \psi}_{\rm II}^H)\\{\bm \Theta}_{1,\rm II} 
\end{bmatrix}
\left[{\bm \psi}_{\rm II},~{\rm diag}({\bm \psi}_{\rm II}){\bm \Theta}_{1,\rm II}^H,~{\bm \Theta}_{1,\rm II}^H\right]\notag\\
\hspace{-0.15cm}&\stackrel{(b)}{=}\hspace{-0.15cm}\begin{bmatrix}
I_2  &{\bm 1}_{1\times I_2} {\bm \Theta}_{1,\rm II}^H&{\bm \psi}_{\rm II}^H{\bm \Theta}_{1,\rm II}^H\\
{\bm \Theta}_{1,\rm II} {\bm 1}_{I_2\times 1}&{\bm \Theta}_{1,\rm II} {\bm \Theta}_{1,\rm II}^H&{\bm \Theta}_{1,\rm II} {\rm diag}({\bm \psi}_{\rm II}^H){\bm \Theta}_{1,\rm II}^H\\
{\bm \Theta}_{1,\rm II} {\bm \psi}_{\rm II}& {\bm \Theta}_{1,\rm II} {\rm diag}({\bm \psi}_{\rm II}){\bm \Theta}_{1,\rm II}^H&{\bm \Theta}_{1,\rm II} {\bm \Theta}_{1,\rm II}^H
\end{bmatrix}
\end{align}
where $(b)$ holds since ${\bm \psi}_{\rm II}^H{\bm \psi}_{\rm II}=I_2$, ${\rm diag}({\bm \psi}_{\rm II}^H){\bm \psi}_{\rm II}={\bm 1}_{I_2\times 1}$,
and ${\rm diag}({\bm \psi}_{\rm II}^H){\rm diag}({\bm \psi}_{\rm II})={\bm I}_{I_2}$ for the (common) phase-shift vector at IRS~2.
Then the optimal condition of ${\bm \Omega}_{\rm II} {\bm \Omega}_{\rm II}^H=I_2 {\bm I}_{2M_1+1}$ to achieve the minimum MSE in \eqref{MSE_II} is equivalent to the following conditions:
\begin{align}
{\bm \Theta}_{1,\rm II} {\bm \Theta}_{1,\rm II}^H&=I_2 {\bm I}_{M_1},\label{con1}\\
 {\bm \Theta}_{1,\rm II} {\bm 1}_{I_2\times 1}&={\bm 0}_{M_1\times 1},\label{con2}\\
{\bm \Theta}_{1,\rm II} {\bm \psi}_{\rm II}&={\bm 0}_{M_1\times 1},\label{con3}\\
{\bm \Theta}_{1,\rm II} {\rm diag}({\bm \psi}_{\rm II}^H){\bm \Theta}_{1,\rm II}^H&={\bm 0}_{M_1\times M_1}.\label{con4}
\end{align}
As such, we need to carefully construct ${\bm \Theta}_{1,\rm II} $ and ${\bm \psi}_{\rm II}^H$ to satisfy all the conditions 
in \eqref{con1}-\eqref{con4} under the full-reflection constraint with the always-ON IRSs so as to minimize the MSE in \eqref{MSE_II}, which, however, is not a trivial task. 

Fortunately, we notice that the DFT matrix has the perfect orthogonality desired and the summation of each row (except the first row with all-one elements) is $0$, which can be exploited for the joint training reflection design of ${\bm \Theta}_{1,\rm II} $ and ${\bm \psi}_{\rm II}$. 
In particular, we need to carefully select $M+1$ rows from the DFT matrix to construct
${\bm \Theta}_{1,\rm II} $ and ${\bm \psi}_{\rm II}$ 
for achieving perfect orthogonality in ${\bm \Omega}_{\rm II}$,
so as to
minimize the MSE in \eqref{MSE_II}.\footnote{The performance of the heuristic training reflection design of ${\bm \Theta}_{1,\rm II} $ and ${\bm \psi}_{\rm II}$ by drawing arbitrary $M_1+1$ rows from the DFT matrix will be evaluated by simulations in Section \ref{Sim}, which is shown to be much worse than that of our proposed training reflection design.}
To this end, we propose to construct an optimal joint training reflection design of ${\bm \Theta}_{1,\rm II} $ and ${\bm \psi}_{\rm II}$ as follows.
\begin{figure}[!t]
	\centering
	\includegraphics[width=3.5in]{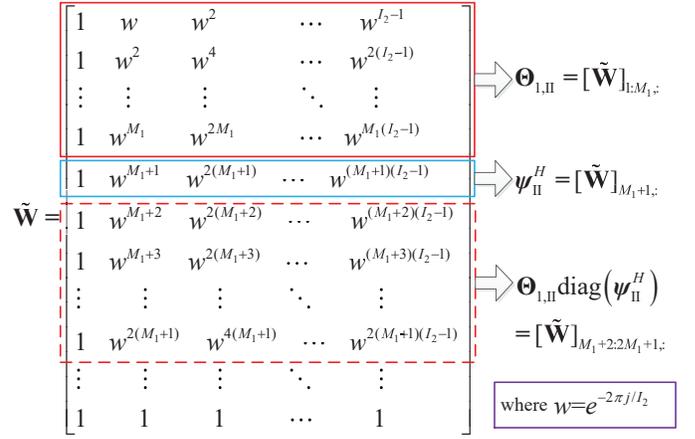}
	\setlength{\abovecaptionskip}{-6pt}
	\caption{Illustration of constructing an optimal joint training design of ${\bm \Theta}_{1,\rm II} $ and ${\bm \psi}_{\rm II}$.}
	\label{DFT matrix}
\end{figure}

Let ${\bm W}$ denote the $I_2 \times I_2$ DFT matrix,
where $I_2\ge 2M_1+1$ is required for the LS estimation in \eqref{est_F1}. First, we move the first 
row of ${\bm W}$ to the last and form a new $I_2 \times I_2$ matrix denoted by ${\tilde{\bm W}}$.
To satisfy the conditions in \eqref{con1}-\eqref{con2}, we let the training reflection matrix at IRS~1 
as the first $M_1$ rows of ${\tilde{\bm W}}$, i.e.,
${\bm \Theta}_{1,\rm II}=[{\tilde{\bm W}}]_{1:M_1,:}$.
Then it can be observed that, if we design the common phase-shift vector at IRS~2 as the $(M_1+1)$-th row of ${\tilde{\bm W}}$, i.e., ${\bm \psi}_{\rm II}^H=[{\tilde{\bm W}}]_{M_1+1,:}$, we then have ${\bm \Theta}_{1,\rm II} {\rm diag}({\bm \psi}_{\rm II}^H)=[{\tilde{\bm W}}]_{M_1+2:2M_1+1,:}$ due to the row-shifting effect of ${\rm diag}({\bm \psi}_{\rm II}^H)$ on ${\bm \Theta}_{1,\rm II}$ and thus the conditions in \eqref{con3}-\eqref{con4} can be simultaneously satisfied due to the pairwise orthogonality among rows in ${\tilde{\bm W}}$. The above optimal training reflection matrix construction is illustrated in Fig.~\ref{DFT matrix}.
With the optimal ${\bm \Theta}_{1,\rm II} $ and ${\bm \psi}_{\rm II}$ that satisfy ${\bm \Omega}_{\rm II} {\bm \Omega}_{\rm II}^H=I_2 {\bm I}_{2M_1+1}$, the minimum MSE of \eqref{MSE_II} is achieved, which is given by $\varepsilon_{\rm II}^{\rm min}=\frac{\sigma^2}{I_2}$ for Phase~II.


\subsubsection{Case 2} $N< M_2$.
The MSE of the LS channel estimation in \eqref{est_Ek2} is given by
\begin{align}\label{MSE_IIc2}
\varepsilon_{\rm II}'&=\frac{1}{M_2+M_1M_2+NM_1} {\mathbb E}\left\{  \left\| \begin{bmatrix}
{\rm vec}({\hat{\bm E}}) \\{\rm vec}({\hat{\bm R}})
\end{bmatrix}
-\begin{bmatrix}
{\rm vec}({{\bm E}}) \\{\rm vec}({{\bm R}})
\end{bmatrix}\right\|^{2}
\right\} \notag\\
&=\frac{1}{M_2+M_1M_2+NM_1}  {\mathbb E}\left\{ \Big\| {\bm \Xi}^{\dagger}{\bm v}_{\rm II} \Big\|^{2}\right\}\notag\\
&=\frac{1}{M_2+M_1M_2+NM_1}  \text{tr}\left\{ {\bm \Xi}^{\dagger} 	
{\mathbb E}\left\{ {\bm v}_{\rm II}  {\bm v}_{\rm II}^H \right\}
\left( {\bm \Xi}^{\dagger} \right)^H \right\}\notag\\
&=\frac{\sigma^2}{M_2+M_1M_2+NM_1} 
\text{tr}\left\{ \left( {\bm \Xi}^H {\bm \Xi} \right)^{-1}	\right\}
\end{align}
where ${\mathbb E}\left\{ {\bm v}_{\rm II}  {\bm v}_{\rm II}^H \right\}=\sigma^2 {\bm I}_{I_2 N}$.
Moreover, according to the expression of ${\bm \Xi}$ given in \eqref{rec_z4}, we have \eqref{matA} shown at the top of next page, where $(c)$ in \eqref{matA} is obtained according to the mixed-product property of Kronecker product.
\begin{figure*}
\begin{align}\label{matA}
{\bm \Xi}^H {\bm \Xi}=&
\begin{bmatrix}
({\tilde{\bm \theta}}_{1,\rm II}^{(1)})^*\otimes ( {\bar{\bm Q}}{\bm \Psi}_{\rm II}^{(1)})^H 
&\cdots
&({\tilde{\bm \theta}}_{1,\rm II}^{(I_2)})^*\otimes ( {\bar{\bm Q}}{\bm \Psi}_{\rm II}^{(I_2)})^H \\
({\bm \theta}_{1,\rm II}^{(1)})^*\otimes {\bm I}_{N} 
&\cdots
&({\bm \theta}_{1,\rm II}^{(I_2)})^*\otimes {\bm I}_{N} 
\end{bmatrix}
\begin{bmatrix}
({\tilde{\bm \theta}}_{1,\rm II}^{(1)})^T\otimes {\bar{\bm Q}}{\bm \Psi}_{\rm II}^{(1)} 
&({\bm \theta}_{1,\rm II}^{(1)})^T\otimes {\bm I}_{N} \\
\vdots&\vdots\\
({\tilde{\bm \theta}}_{1,\rm II}^{(I_2)})^T\otimes {\bar{\bm Q}}{\bm \Psi}_{\rm II}^{(I_2)} 
&({\bm \theta}_{1,\rm II}^{(I_2)})^T\otimes {\bm I}_{N} 
\end{bmatrix}\notag\\
\stackrel{(c)}{=}&\sum_{i=1}^{I_2}\begin{bmatrix}  
\left(({\tilde{\bm \theta}}_{1,\rm II}^{(i)})^*({\tilde{\bm \theta}}_{1,\rm II}^{(i)})^T\right)\otimes 
\left( ( {\bar{\bm Q}}{\bm \Psi}_{\rm II}^{(i)})^H {\bar{\bm Q}} {\bm \Psi}_{\rm II}^{(i)}\right)
&  
\left(({\tilde{\bm \theta}}_{1,\rm II}^{(i)})^*({\bm \theta}_{1,\rm II}^{(i)})^T\right)\otimes 
\left(({\bar{\bm Q}}{\bm \Psi}_{\rm II}^{(i)})^H \right)
\\
\left(({\bm \theta}_{1,\rm II}^{(i)})^*({\tilde{\bm \theta}}_{1,\rm II}^{(i)})^T\right)\otimes 
\left(  {\bar{\bm Q}} {\bm \Psi}_{\rm II}^{(i)} \right)
& 
\left(({\bm \theta}_{1,\rm II}^{(i)})^*({\bm \theta}_{1,\rm II}^{(i)})^T\right)\otimes 
{\bm I}_{N}
\end{bmatrix}\hspace{-0.15cm}
\end{align}
	\hrulefill
\vspace{-0.2cm}
\end{figure*}

It is noted that for any arbitrary $ {\bar{\bm Q}}$ involved in \eqref{matA}, it is generally difficult (if not impossible in some special cases) to jointly design the optimal training
reflection phase-shifts $\left\{{\bm \theta}_{1,\rm II}^{(i)},{\bm \theta}_{2,\rm II}^{(i)}\right\}_{i=1}^{I_2}$ at the two IRSs in Phase~II to minimize the MSE in \eqref{MSE_IIc2}.
Alternatively, we can construct the training
reflection phase-shifts $\left\{{\bm \theta}_{1,\rm II}^{(i)},{\bm \theta}_{2,\rm II}^{(i)}\right\}_{i=1}^{I_2}$ based on some orthogonal matrices (e.g., the DFT matrix, Hadamard matrix, and circulant matrix generated by Zadoff-Chu sequence \cite{Polyphase1972Polyphase}) to achieve ${\rm rank}\left( {\bm \Xi} \right)=M_2+M_1M_2+NM_1$ required by the LS estimation in \eqref{est_Ek2}.
\section{Double-IRS Channel Estimation for Multi-User Case}\label{MU_design_ON}
For the multi-user channel estimation, a straightforward method is by adopting the single-user channel estimation in 
Section~\ref{SU_design_ON} to estimate the cascaded channels of $K$ users separately over consecutive time, which, however, increases the total training overhead by $K$ times as compared to the single-user case and thus is practically prohibitive if $K$ is large. 
To achieve more efficient channel training, we extend the proposed channel estimation scheme and training reflection design for the single-user case in Section \ref{SU_design_ON} to the general multi-user case in this section in a different way. The key idea is to exploit 
the fact that given the cascaded channel of an arbitrary user (referred to as the reference user) estimated as in the single-user case (cf. Section~\ref{SU_design_ON}), the other users' cascaded channels can be expressed as lower-dimensional scaled versions of it and thus estimated with substantially reduced channel training.
The detailed channel estimation scheme and training reflection design for the general multi-user case are elaborated in the two subsequent subsections, respectively.

\subsection{Extended Channel Estimation Scheme for Multiple Users}
After estimating the cascaded channel of an arbitrary user as in the single-user case of Section~\ref{SU_design_ON}, the cascaded channels of the other users can be efficiently estimated by exploiting the property that all the users
share the common IRS~2$\rightarrow$BS (i.e, ${{\bm G}}_2$), IRS~1$\rightarrow$BS (i.e, ${{\bm G}}_1$), and IRS~1$\rightarrow$IRS~2 (i.e, ${\bm D}$) links in their respective single- and double-reflection channels (see Fig.~\ref{system}). In particular, we have the following lemma for the cascaded channels of the other users.
\begin{lemma}
With the cascaded CSI of an arbitrary user (say, $\left\{{{\bm Q}}_{1,m}\right\}_{m=1}^{M_1}$, ${\tilde{\bm R}}_{1}$, and ${{\bm R}}_{1}$ of user 1) available as the reference CSI,
we only need to estimate the scaling vectors \rev{${{\bm b}}_{k}={\rm diag} \left({{\bm u}}_{1}\right)^{-1} {{\bm u}}_{k} \in {\mathbb{C}^{M_1\times 1}}$ and
${\tilde {\bm b}}_{k}={\rm diag} \left({\tilde{\bm u}}_{1}\right)^{-1} {\tilde{\bm u}}_{k} \in {\mathbb{C}^{M_2\times 1}}$ }
to acquire the cascaded CSI of the remaining $K-1$ users with $k=2,\ldots,K$.
\end{lemma}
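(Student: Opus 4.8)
The plan is to exploit that all $K$ users share the three common links ${{\bm G}}_1$, ${{\bm G}}_2$, and ${{\bm D}}$ (the IRS~1$\rightarrow$BS, IRS~2$\rightarrow$BS, and IRS~1$\rightarrow$IRS~2 channels, respectively), so that the cascaded channels of any user~$k$ differ from those of the reference user (user~1) only through the user-specific channels ${{\bm u}}_{k}$ and ${\tilde{\bm u}}_{k}$. Since each cascaded channel is formed by right-multiplying a common matrix by a diagonal matrix built from ${{\bm u}}_{k}$ or ${\tilde{\bm u}}_{k}$, I expect each of them to reduce to the corresponding reference-user cascaded channel post-multiplied (or scaled) by a diagonal factor that depends only on the scaling vectors ${{\bm b}}_{k}$ and ${\tilde{\bm b}}_{k}$, mirroring the normalization idea already used in the proof of Lemma~1.

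Concretely, I would handle the three cascaded channels separately. For the IRS~1 single-reflection channel, recall ${{\bm R}}_{k}={{\bm G}}_1 {\rm diag} \left( {{\bm u}}_{k} \right)$; writing ${\rm diag}({{\bm u}}_{k})={\rm diag}({{\bm u}}_{1}){\rm diag}({{\bm b}}_{k})$ directly from the definition ${{\bm b}}_{k}={\rm diag}({{\bm u}}_{1})^{-1}{{\bm u}}_{k}$ immediately gives ${{\bm R}}_{k}={{\bm R}}_{1}{\rm diag}({{\bm b}}_{k})$. The identical manipulation applied to ${\tilde{\bm R}}_{k}={{\bm G}}_2 {\rm diag}({\tilde{\bm u}}_{k})$ yields ${\tilde{\bm R}}_{k}={\tilde{\bm R}}_{1}{\rm diag}({\tilde{\bm b}}_{k})$. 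For the double-reflection channel I would use the column-wise structure ${\tilde{\bm d}}_{k,m}={\bm d}_{m}{u}_{k,m}$: factoring as ${\bm d}_{m}{u}_{k,m}=({\bm d}_{m}{u}_{1,m})({u}_{k,m}/{u}_{1,m})$ shows ${\tilde{\bm d}}_{k,m}={b}_{k,m}{\tilde{\bm d}}_{1,m}$, where ${b}_{k,m}$ is the $m$-th entry of ${{\bm b}}_{k}$, so substituting into ${{\bm Q}}_{k,m}={{\bm G}}_2{\rm diag}({\tilde{\bm d}}_{k,m})$ gives the clean scalar relation ${{\bm Q}}_{k,m}={b}_{k,m}{{\bm Q}}_{1,m}$ for all $m=1,\ldots,M_1$.

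Taken together, these three relations show that every cascaded channel of user~$k$ is recovered from the reference CSI $\left\{{{\bm Q}}_{1,m}\right\}_{m=1}^{M_1}$, ${\tilde{\bm R}}_{1}$, ${{\bm R}}_{1}$ once the two lower-dimensional scaling vectors ${{\bm b}}_{k}\in{\mathbb{C}^{M_1\times 1}}$ and ${\tilde{\bm b}}_{k}\in{\mathbb{C}^{M_2\times 1}}$ are known, which is precisely the claim. I do not anticipate a genuine obstacle, since the argument is purely algebraic; the only point requiring care is the well-definedness of ${\rm diag}({{\bm u}}_{1})^{-1}$ and ${\rm diag}({\tilde{\bm u}}_{1})^{-1}$, which needs the reference user's channels ${{\bm u}}_{1}$ and ${\tilde{\bm u}}_{1}$ to have no zero entries. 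Under the continuous fading model assumed this holds with probability one, so the scaling vectors are almost surely well defined, and the per-user unknowns to be estimated drop from the full cascaded dimension down to just $M_1+M_2$.
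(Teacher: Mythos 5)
Your proof is correct and follows essentially the same route as the paper's: factoring ${\rm diag}({\bm u}_k)={\rm diag}({\bm u}_1){\rm diag}({\bm b}_k)$ (and likewise for ${\tilde{\bm u}}_k$ and the columns ${\tilde{\bm d}}_{k,m}={b}_{k,m}{\tilde{\bm d}}_{1,m}$) to obtain ${{\bm R}}_{k}={{\bm R}}_{1}{\rm diag}({{\bm b}}_{k})$, ${\tilde{\bm R}}_{k}={\tilde{\bm R}}_{1}{\rm diag}({\tilde {\bm b}}_{k})$, and ${{\bm Q}}_{k,m}={{\bm Q}}_{1,m}{b}_{k,m}$, exactly as in the paper's \eqref{R_k1}--\eqref{cascaded_Q4}. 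Your added remark on the almost-sure invertibility of ${\rm diag}({\bm u}_1)$ and ${\rm diag}({\tilde{\bm u}}_1)$ is a sensible precaution the paper leaves implicit, but it does not change the argument.
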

\begin{proof}
If given the cascaded CSI of user 1 as the reference CSI,
we can rewrite the two single-reflection channels $\left\{{{\bm R}}_{k}, {\tilde{\bm R}}_{k}\right\},\forall k=2,\ldots,K$ in \eqref{superposed3} as
\begin{align}
{{\bm R}}_{k}&={{\bm G}}_1   {\rm diag} \left( {{\bm u}}_{k} \right)=
{{\bm G}}_1{\rm diag} \left({{\bm u}}_{1}\right) \cdot {\rm diag}\left({{\bm u}}_{1}\right)^{-1} {{\bm u}}_{k}\notag\\
&=
{{\bm R}}_{1}{\rm diag} \left({{\bm b}}_{k} \right)\label{R_k1}\\
{\tilde{\bm R}}_{k}&={{\bm G}}_2   {\rm diag} \left( {\tilde{\bm u}}_{k} \right)=
{{\bm G}}_2{\rm diag} \left({\tilde{\bm u}}_{1}\right) \cdot {\rm diag}\left({\tilde{\bm u}}_{1}\right)^{-1} {\tilde{\bm u}}_{k}\notag\\
&=
{\tilde{\bm R}}_{1} {\rm diag} \left( {\tilde {\bm b}}_{k} \right)\label{R_k2}
\end{align}
and the double-reflection channels $\left\{{{\bm Q}}_{k,m}\right\}_{m=1}^{M_1},\forall k=2,\ldots,K$ in \eqref{superposed3} as
\begin{align}\label{cascaded_Q4}
{{\bm Q}}_{k,m}=&{{\bm G}}_2 {\rm diag} \left({\tilde{\bm d}}_{k,m}\right)=
{{\bm G}}_2 {\rm diag} \left({\bm d}_{m} {{u}}_{k,m} \right)\notag\\
=&{{\bm G}}_2 {\rm diag} \left({\bm d}_{m} {{u}}_{1,m}\right) \cdot{{u}}_{1,m}^{-1} {{u}}_{k,m}= {{\bm Q}}_{1,m}{{b}}_{k,m}
\end{align}
with $m=1,\ldots,M_1$,
\rev{where ${{\bm b}}_{k}\triangleq\left[{{b}}_{k,1},\ldots,{{b}}_{k,M_1}\right]^T={\rm diag} \left({{\bm u}}_{1}\right)^{-1} {{\bm u}}_{k}\in {\mathbb{C}^{M_1\times 1}}$ and
${\tilde {\bm b}}_{k}={\rm diag} \left({\tilde{\bm u}}_{1}\right)^{-1} {\tilde{\bm u}}_{k}\in {\mathbb{C}^{M_2\times 1}}$} are two scaling vectors
of user~$k$$\rightarrow$IRS~1 and user~$k$$\rightarrow$IRS~2 channels normalized by ${{\bm u}}_{1}$ and ${\tilde{\bm u}}_{1}$, respectively.
As such, with the cascaded CSI of user 1 available as the reference CSI, we only need to further estimate $\left\{{{\bm b}}_{k},{\tilde {\bm b}}_{k}\right\}_{k=2}^{K}$ to acquire the cascaded CSI of the remaining $K-1$ users according to the channel relationship in \eqref{R_k1}-\eqref{cascaded_Q4}, thus completing the proof.
\end{proof}
According to Lemma 2, after acquiring the cascaded CSI of user 1 as for the single-user case in
Section~\ref{SU_design_ON}, we only need to further estimate the scaling vectors $\left\{{{\bm b}}_{k},{\tilde {\bm b}}_{k}\right\}_{k=2}^{K}$ for the remaining $K-1$ users.
By substituting \eqref{R_k1}-\eqref{cascaded_Q4} into \eqref{superposed3}, we can re-express the channel model in \eqref{superposed3} as 
\begin{align}
&{\bm h}_k
=\sum_{m=1}^{M_1} {{\bm Q}}_{1,m} {\bm \theta}_2 {\theta}_{1,m} {{b}}_{k,m}
+{\tilde{\bm R}}_{1} {\rm diag} \left( {\tilde {\bm b}}_{k} \right){\bm \theta}_2
+{{\bm R}}_{1}{\rm diag} \left({{\bm b}}_{k} \right){\bm \theta}_1\notag\\
&=\left[{{\bm Q}}_{1,1} {\bm \theta}_2,\ldots, {{\bm Q}}_{1,M} {\bm \theta}_2\right] {\rm diag} \left({\bm \theta}_1\right)
{{\bm b}}_{k} \notag\\
&~~~+{\tilde{\bm R}}_{1} {\rm diag} \left({\bm \theta}_2\right) {\tilde {\bm b}}_{k}
+{{\bm R}}_{1} {\rm diag} \left({\bm \theta}_1\right) {{\bm b}}_{k}\notag\\
&=\underbrace{  \left[  \left(\left[{{\bm Q}}_{1,1} {\bm \theta}_2,\ldots, {{\bm Q}}_{1,M} {\bm \theta}_2\right]+{{\bm R}}_{1}\right){\bm \Phi}_1 ,~ {\tilde{\bm R}}_{1}{\bm \Phi}_2\right] }_{{\bm B}\in {\mathbb{C}^{N\times (M_1+M_2) }}}
\underbrace{\begin{bmatrix}
	{{\bm b}}_{k}\\{\tilde {\bm b}}_{k}
	\end{bmatrix}}_{{\bm \lambda}_{k}}
\label{superposed7}
\end{align}
\rev{where ${\bm \lambda}_{k} \in {\mathbb{C}^{(M_1+M_2)\times 1}} $ denotes the the stacked scaling vector of user $k$.}
Following Phases I-II for estimating the cascaded CSI of the
single user (i.e., user 1) in Section~\ref{SU_design_ON},
we propose the joint estimation of $\left\{ {\bm \lambda}_{k} \right\}_{k=2}^{K}$ 
in the subsequent Phase III, which is elaborated in the next.

{\bf Phase III (Estimation of $\{{{\bm b}}_{k},{\tilde {\bm b}}_{k}\}_{k=2}^{K}$):} 
In this phase, we fix the reflection phase-shifts of the two IRSs for simplicity\footnote{Note that since the cascaded CSI of $\left\{{{\bm Q}}_{1,m}\right\}_{m=1}^{M_1}$, ${\tilde{\bm R}}_{1}$, and ${{\bm R}}_{1}$ has been acquired in Phases~I-II, we can next optimize the IRS reflection phase-shifts $\left\{{\bm \theta}_{1,\rm III}, {\bm \theta}_{2,\rm III}\right\}$ in ${\bm B}$ to facilitate the joint estimation of $\left\{ {\bm \lambda}_{k}\right\}_{k=2}^{K}$ in Phase~III.}, i.e., ${\bm \theta}_{1,\rm III}^{(i)}={\bm \theta}_{1,\rm III}$ and ${\bm \theta}_{2,\rm III}^{(i)}={\bm \theta}_{2,\rm III}$, and 
design the concurrent pilot symbols for the remaining $K-1$ users to facilitate the
joint estimation of $\left\{ {\bm \lambda}_{k} \right\}_{k=2}^{K}$.
Based on the channel model in \eqref{superposed7} with the fixed reflection phase-shifts $\left\{{\bm \theta}_{1,\rm III}, {\bm \theta}_{2,\rm III}\right\}$ of the two IRSs, the received signal at the BS during symbol period~$i$ of Phase III can be expressed as
\begin{align}\label{rec_MU_ON}
\hspace{-0.2cm}{\bm z}_{\rm III}^{(i)}\hspace{-0.1cm}=\hspace{-0.1cm} \sum_{k=2}^{K} x_k^{(i)} {\bm B}  {\bm \lambda}_{k}
+{\bm v}_{\rm III}^{(i)}={\bm B} \underbrace{\left[{\bm \lambda}_{2},\ldots,{\bm \lambda}_{K}\right]}_{{\bm \Lambda}}
	 {\bm x}^{(i)}+{\bm v}_{\rm III}^{(i)}\hspace{-0.2cm}
\end{align}
where $i=1,\ldots,I_3$ with $I_3$ denoting the number of (pilot) symbol periods in Phase III, \rev{${\bm x}^{(i)}=\left[x_{2}^{(i)},\ldots,x_{K}^{(i)}\right]^T \in {\mathbb{C}^{(K-1)\times 1}}$} denotes the pilot symbol vector transmitted by the remaining $K-1$ users, and ${\bm v}_{\rm III}^{(i)}\sim {\mathcal N_c }({\bm 0}, \sigma^2{\bm I}_N)$ is the AWGN vector with normalized noise power of $\sigma^2$.
\rev{Note that ${\bm B}$ can be regarded as the spatial observation matrix for \eqref{rec_MU_ON} and its column rank will affect the training design and the corresponding minimum training overhead of Phase III.
As such, for the joint estimation of ${\bm \Lambda}\triangleq\left[{\bm \lambda}_{2},\ldots,{\bm \lambda}_{K}\right]\in {\mathbb{C}^{(M_1+M_2)\times (K-1)}}$, we consider the following two cases depending on whether ${\bm B}$ is of full column rank or not.}

\subsubsection{Case 1} $N\ge M_1+M_2$. 
In this case, we stack the received
signal vectors \rev{$\left\{{\bm z}_{\rm III}^{(i)}\right\}$} over $I_3$ symbol periods as \rev{${\bm Z}_{\rm III}=\left[{\bm z}_{\rm III}^{(1)},{\bm z}_{\rm III}^{(2)},\ldots,{\bm z}_{\rm III}^{(I_3)}\right]\in {\mathbb{C}^{N\times I_3}}$} and thus obtain
\begin{align}\label{rec_MU_ON2}
{\bm Z}_{\rm III}&=   {\bm B}  {\bm \Lambda} {\bm X}
+{\bm V}_{\rm III}
\end{align}
where 
\rev{${\bm X}=\left[{\bm x}^{(1)},{\bm x}^{(2)},\ldots,{\bm x}^{(I_3)}\right]\in {\mathbb{C}^{(K-1)\times I_3}}$} is the pilot symbol matrix of the remaining $K-1$ users and
\rev{${\bm V}_{\rm III}=\left[{\bm v}_{\rm III}^{(1)},{\bm v}_{\rm III}^{(2)},\ldots,{\bm v}_{\rm III}^{(I_3)}\right]\in {\mathbb{C}^{N\times I_3}}$} is the corresponding AWGN matrix.
By properly designing the training
reflection phase-shifts $\left\{{\bm \theta}_{1,\rm III}, {\bm \theta}_{2,\rm III}\right\}$ of the two IRSs as well as the pilot symbol matrix ${\bm X}$ such that ${\rm rank}\left(  {\bm B} \right)=M_1+M_2$ and ${\rm rank}\left( {\bm X}\right)=K-1$,
the LS estimate of ${\bm \Lambda}$ based on \eqref{rec_MU_ON2} is given by
\begin{align}\label{Est_b}
{\hat{\bm \Lambda}}= {\bm B}^{\dagger}{\bm Z}_{\rm III} {\bm X}^{\dagger} ={\bm \Lambda} +{\bm B}^{\dagger}{\bm V}_{\rm III} {\bm X}^{\dagger}
\end{align}
where ${\bm B}^{\dagger}=\left({\bm B}^H{\bm B}\right)^{-1}{\bm B}^H$ and ${\bm X}^{\dagger}={\bm X}^H \left({\bm X}{\bm X}^H\right)^{-1}$.
Note that $I_3\ge K-1$ is
required to ensure ${\rm rank}\left( {\bm X}\right)=K-1$ and thus the existence of ${\bm X}^{\dagger}$ for the case of $N\ge M_1+M_2$.

\subsubsection{Case 2} $N< M_1+M_2$. In this case, since ${\bm B}$ is (column) rank-deficient, i.e., ${\rm rank}\left({\bm B}\right)=N<  M_1+M_2$,
we cannot estimate ${\bm \Lambda}$ according to \eqref{Est_b}. 
Alternatively, we stack the received signal vectors $\{ {\bm z}_{\rm III}^{(i)}\}$ over $I_3$ symbol periods as \rev{${\bm z}_{\rm III}=\left[({\bm z}_{\rm III}^{(1)})^T,({\bm z}_{\rm III}^{(2)})^T,\ldots,({\bm z}_{\rm III}^{(I_3)})^T\right]^T={\rm vec} \left({\bm Z}_{\rm III}\right)\in {\mathbb{C}^{I_3 N\times 1}}$} and thus obtain
\begin{align}\label{rec_MU_ON3}
{\bm z}_{\rm III} 
= \left({\bm X}^T \otimes {\bm B}\right)  {\rm vec} \left({\bm \Lambda} \right)
+{\bm v}_{\rm III}
\end{align}
where \rev{${\bm v}_{\rm III}=\left[({\bm v}_{\rm III}^{(1)})^T,({\bm v}_{\rm III}^{(2)})^T,\ldots,({\bm v}_{\rm III}^{(I_3)})^T\right]^T\in {\mathbb{C}^{I_3 N\times 1}}$} is the corresponding AWGN vector in Phase III.
Accordingly, if we design the training
reflection phase-shifts $\left\{{\bm \theta}_{1,\rm III}, {\bm \theta}_{2,\rm III}\right\}$ of the two IRSs as well as the pilot symbol matrix ${\bm X}$ properly 
such that ${\rm rank}\left( {\bm X}^T \otimes {\bm B} \right)=(K-1)(M_1+M_2)$, the LS estimate of ${\rm vec} \left({\bm \Lambda} \right)$ based on \eqref{rec_MU_ON3} is given by
\begin{align}\label{est_bk}
\hspace{-0.2cm}{\rm vec} \left({\hat{\bm \Lambda}} \right)
\hspace{-0.1cm}=\hspace{-0.1cm}\left({\bm X}^T \otimes {\bm B}\right)^{\dagger} {\bm z}_{\rm III}  \hspace{-0.1cm}=\hspace{-0.1cm}{\rm vec} \left({\bm \Lambda} \right)
+\left({\bm X}^T \otimes {\bm B}\right)^{\dagger} {\bm v}_{\rm III}\hspace{-0.15cm}
\end{align}
where 
$\left({\bm X}^T \hspace{-0.05cm}\otimes\hspace{-0.05cm} {\bm B}\right)^{\dagger}\hspace{-0.1cm}=\hspace{-0.1cm}\left(\hspace{-0.1cm} \left({\bm X}^T \hspace{-0.05cm}\otimes\hspace{-0.05cm} {\bm B}\right)^H \left({\bm X}^T \hspace{-0.05cm}\otimes\hspace{-0.05cm} {\bm B}\right)\hspace{-0.1cm}\right)^{-1}
 \hspace{-0.1cm}\left({\bm X}^T \otimes {\bm B}\right)^H$.
Note that since $I_3 N \ge (K-1)(M_1+M_2) $ is required to ensure ${\rm rank}\left( {\bm X}^T \otimes {\bm B} \right)=(K-1)(M_1+M_2)$ with $I_3$ being an integer, we have $I_3 \ge \left\lceil\frac{(K-1)(M_1+M_2)}{N}\right\rceil$ for the case of $N< M_1+M_2$.
Moreover, it is worth pointing out that the LS channel estimation based on \eqref{est_bk} for the case of $N< M_1+M_2$ can also be applied to the case of $N\ge M_1+M_2$, but it incurs higher complexity due to the larger-size matrix inversion operation for $\left({\bm X}^T \otimes {\bm B}\right)^{\dagger}$.

\subsection{Training Design for Multiple Users}
It can be verified that for the case of $N \ge M_1+M_2$, the LS channel estimation based on \eqref{est_bk} 
is also equivalent to that based on \eqref{Est_b} with proper vectorization. 
As such, we focus on the MSE derivation for the LS channel estimation based on \eqref{est_bk}, which is given by
\begin{align}\label{MSE_III}
&\varepsilon_{\rm III}=\frac{1}{(K-1)(M_1+M_2)} {\mathbb E}\left\{  \left\|
{\rm vec} \left({\hat{\bm \Lambda}} \right) 
-{\rm vec} \left({\bm \Lambda} \right)\right\|^{2}
\right\} \notag\\
&=\frac{1}{(K-1)(M_1+M_2)}  {\mathbb E}\left\{ \Big\|\left({\bm X}^T \otimes {\bm B}\right)^{\dagger} {\bm v}_{\rm III} \Big\|^{2}\right\}\notag\\
&=\frac{1}{(K-1)(M_1+M_2)}  \text{tr}\hspace{-0.1cm}\left\{\hspace{-0.1cm} \left({\bm X}^T \hspace{-0.1cm}\otimes \hspace{-0.1cm}{\bm B}\right)^{\dagger}\hspace{-0.1cm}
{\mathbb E}\left\{ {\bm v}_{\rm III}  {\bm v}_{\rm III}^H \right\}
\hspace{-0.1cm}\left(\hspace{-0.05cm} \left({\bm X}^T \hspace{-0.1cm}\otimes\hspace{-0.1cm} {\bm B}\right)^{\dagger} \right)^H \hspace{-0.1cm}\right\}\notag\\
&=\frac{\sigma^2}{(K-1)(M_1+M_2)} 
\text{tr}\left\{ \left( \left({\bm X}^T \otimes {\bm B}\right)^H \left({\bm X}^T \otimes {\bm B}\right) \right)^{-1}	\right\}\notag\\
&\stackrel{(d)}{=}\frac{\sigma^2}{(K-1)(M_1+M_2)}  \text{tr}\left\{\hspace{-0.1cm} \left({\bm X}{\bm X}^H\right)^{-1}	\hspace{-0.1cm}\right\}
\text{tr}\left\{\hspace{-0.1cm} \left({\bm B}^H{\bm B}\right)^{-1}	\hspace{-0.1cm}\right\}\hspace{-0.2cm}
\end{align}
where ${\mathbb E}\left\{ {\bm v}_{\rm III}  {\bm v}_{\rm III}^H \right\}=\sigma^2 {\bm I}_{I_3 N}$ and $(d)$ is obtained according to the mixed-product and trace properties of Kronecker product.
As such, the MSE minimization in \eqref{MSE_III} is equivalent to minimizing $\text{tr}\left\{ \left({\bm X}{\bm X}^H\right)^{-1}	\right\}$ and $\text{tr}\left\{ \left({\bm B}^H{\bm B}\right)^{-1}	\right\}$, respectively.
Specifically, the minimization of $\text{tr}\left\{ \left({\bm X}{\bm X}^H\right)^{-1}	\right\}$ can be achieved if and only if 
${\bm X}{\bm X}^H=I_3{\bm I}_{K-1}$, which implies that the pilot sequences from the remaining $K-1$ users should be orthogonal to each other in Phase III. Accordingly, we can design the pilot symbol matrix ${\bm X}$ in Phase III as e.g., the submartix of the $I_3 \times I_3$ DFT matrix with its first $K-1$ rows.
On the other hand, since the matrix ${\bm B}$ involves the cascaded CSI of $\left\{{{\bm Q}}_{1,m}\right\}_{m=1}^{M_1}$, ${\tilde{\bm R}}_{1}$, and ${{\bm R}}_{1}$ which can be arbitrary in practice, it is generally difficult to design the optimal training reflection phase-shifts of the two IRSs to minimize $\text{tr}\left\{ \left({\bm B}^H{\bm B}\right)^{-1}	\right\}$. Alternatively, we can design the training reflection phase-shifts of the two IRSs based on some orthogonal matrices/sequences (e.g., the DFT matrix), to 
satisfy the rank conditions required by the LS channel estimations in \eqref{Est_b} and \eqref{est_bk}, whose performance will be examined by simulations in Section~\ref{Sim}.

\section{Numerical Results}\label{Sim}
In this section, we present numerical results to validate the effectiveness of our proposed channel estimation
scheme as well as the corresponding training design for the double-IRS aided multi-user MIMO system.
Under the three-dimensional (3D) Cartesian coordinate system, we assume that the central (reference) points of the BS, IRS~2, IRS~1, and user cluster are located at $(1,0,2)$, $(0,0.5,1)$, $(0,49.5,1)$, and $(1,50,0)$ in meter (m), respectively. 
Moreover, we assume that
the BS is equipped with a uniform linear array (ULA); while the two distributed IRSs are equipped with uniform planar arrays (UPAs).
As the element-grouping strategy adopted in \cite{yang2019intelligent,zheng2019intelligent},
each IRS subsurface is a small-size UPA composed of $5\times 5$ adjacent reflecting elements that share a common phase shift for reducing design complexity.
The distance-dependent channel path loss is modeled as $\gamma=\gamma_0/ d^\alpha$, where $\gamma_0$ denotes the path loss at the reference distance of 1~m which is set as $\gamma_0=-30$~dB for all individual links, $d$ denotes the individual link distance, and $\alpha$ denotes the path loss exponent which is set as $2.2$ for the link between the user cluster/BS and its nearby serving IRS (due to the short distance) and set as $3$ for the other links (due to the relatively longer distances). Without loss of generality, all the users are assumed to have equal transmit power, i.e., $P_k = P, \forall k=1,\ldots,K$ and the noise power at the BS is set as $\sigma^2_N=-65$ dBm.
Accordingly, the normalized noise power at the BS is given by $\sigma^2=\sigma^2_N/P$.

Note that there has been very limited work on channel estimation for the double-IRS aided system with the co-existence of single- and double-reflection links.
As such, we consider the following two benchmark channel estimation schemes for comparison.
\begin{itemize}
	\item {\bf Decoupled scheme with ON/OFF IRSs \cite{zheng2020Uplink}:} For the single-user case of this scheme, the cascaded channels of the two single-reflection links, each corresponding to one of the two IRSs respectively, are successively estimated at the multi-antenna BS with the other IRS turned OFF. Then, after canceling the signals over the two single-reflection channels estimated,
	the higher-dimensional double-reflection (i.e., user$\rightarrow$IRS~1$\rightarrow$IRS~2$\rightarrow$BS) channel is efficiently estimated at the BS by exploiting the fact that its cascaded channel coefficients are scaled
	versions of those of the single-reflection (i.e., user$\rightarrow$IRS~2$\rightarrow$BS) channel due to their commonly shared IRS~2$\rightarrow$BS link. For the multi-user case, by leveraging the same (common) channel relationship among users in Lemma 2, the scaling vectors $\left\{{{\bm b}}_{k}\right\}_{k=2}^{K}$ and $\left\{{\tilde {\bm b}}_{k}\right\}_{k=2}^{K}$ are separately estimated at the BS with one of the two IRSs turned OFF.
	\item {\bf Benchmark scheme based on \cite{you2020wireless}:} We extend the channel estimation method proposed in \cite{you2020wireless} as another benchmark scheme, where the double-reflection channel is estimated at each BS antenna in parallel without exploiting the (common) channel relationship with the single-reflection channels, and the cascaded channels of $K$ users are separately estimated over consecutive time. Moreover, as the single-reflection channels were ignored in \cite{you2020wireless}, the same decoupled channel estimation for the two single-reflection channels in \cite{zheng2020Uplink} is adopted for each user in this benchmark scheme.
\end{itemize}
\subsection{Training Overhead Comparison}\label{Training}
The training overhead comparison between the proposed channel estimation scheme and the two benchmark schemes is shown in Table~\ref{Table of estimation}, where $M_1=M_2=M/2$ is assumed for ease of exposition. As can be seen in Table~\ref{Table of estimation}, by exploiting the peculiar channel relationship over single- and double-reflection channels as well as that among multiple users, both the proposed channel estimation scheme with always-ON IRSs and the decoupled channel estimation scheme with ON/OFF IRSs incur much lower-order training overhead than the benchmark scheme based on \cite{you2020wireless}, especially when $N\ge M$. In the rest of this section, given the total number of subsurfaces $M=40$, we set $M_1=M_2=M/2=20$ for the two distributed IRSs.

 \begin{table*}[!t]
	\begin{center}\caption{Training Overhead Comparison for Different Channel Estimation Schemes (Assume $M_1=M_2=M/2$)}\label{Table of estimation}
		\resizebox{0.8\textwidth}{!}{
			\begin{tabular}{|c|c|c|c|}
				\hline
				\multirow{2}{*}{}                                & \multicolumn{3}{c|}{Minimum number of (pilot) symbol periods} \\ \cline{2-4} 
				& $N\ge M$        & $M>N\ge M/2$       & $N< M/2$       \\ \hline
				Proposed scheme with always-ON IRSs &
				$\frac{3}{2}M+K+1$ &
				$\frac{3}{2}M+2+\left\lceil\frac{(K-1)M}{N}\right\rceil$ &
				$M+1+\left\lceil\frac{(M+2)M}{4N}\right\rceil+\left\lceil\frac{(K-1)M}{N}\right\rceil$ \\ \hline
				Decoupled scheme with ON/OFF IRSs \cite{zheng2020Uplink} &
				\multicolumn{2}{c|}{$\frac{3}{2}M+2(K-1)$} &
				$M+\left\lceil\frac{M^2}{4N}\right\rceil+2\left\lceil\frac{(K-1) M}{2N}\right\rceil$ \\ \hline
				Benchmark scheme based on \cite{you2020wireless} & \multicolumn{3}{c|}{$KM+\frac{1}{4}KM^2$}             \\ \hline
			\end{tabular}
		}
	\end{center}
\end{table*}

\begin{figure}
	\centering
	\subfigure[Training overhead versus number of BS antennas $N$.]{
		\begin{minipage}[b]{0.45\textwidth}
			\includegraphics[width=3.0in]{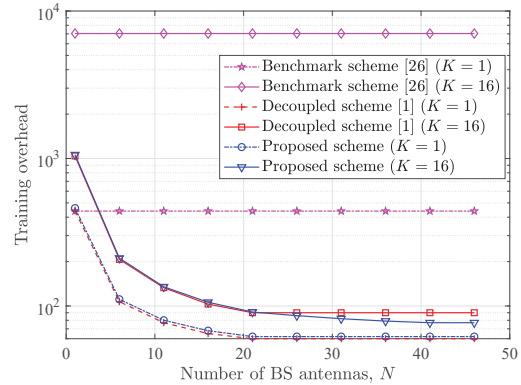}
		\end{minipage}\label{overhead_vsN}
	}\\
	\subfigure[Training overhead versus number of users $K$.]{
		\begin{minipage}[b]{0.45\textwidth}
			\includegraphics[width=3.0in]{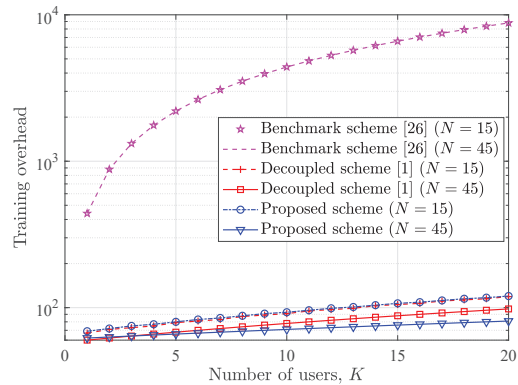}
		\end{minipage}\label{overhead_vsK}
	}
	\setlength{\abovecaptionskip}{-2pt}
	\caption{Training overhead comparison of different channel estimation schemes.} \label{overhead}
\end{figure}

In Fig.~\ref{overhead_vsN}, we show the required training overhead versus the number of BS antennas $N$. It is observed that for both the proposed and decoupled channel estimation schemes, their training overheads decrease dramatically with the increasing number of BS antennas $N$,
which is in sharp contrast to the benchmark scheme based on \cite{you2020wireless} where its training overhead is independent of $N$.
This is expected since both the proposed and decoupled channel estimation schemes exploit the (common) channel relationship and
the multiple antennas at the BS for joint cascaded channel estimation to reduce their training overheads substantially; whereas
in the benchmark scheme based on \cite{you2020wireless}, the BS estimates its cascaded channels associated with different antennas/users independently in parallel without exploiting the channel relationship between them.
When the number of BS antennas is sufficiently large (i.e., $N\ge M=40$), the required training overheads of the proposed and decoupled channel estimation schemes reach their respective lower bounds of $\frac{3}{2}M+K+1$ and
$\frac{3}{2}M+2(K-1)$ pilot symbols, as given in Table~\ref{Table of estimation}.

In Fig.~\ref{overhead_vsK}, we show the required training overhead versus the number of users $K$.
One can observe that the training overhead increment is marginal in both the proposed and decoupled channel estimation schemes as the number of users $K$ increases.
In contrast, the training overhead required by the benchmark scheme based on \cite{you2020wireless} increases dramatically with $K$ since it estimates the channels of different users separately over consecutive time.
In particular, the additional training overhead with one more user is $\text{max}\{1, \left\lceil \frac{M}{N}\right\rceil\}$ and $\text{max}\{2, 2\left\lceil \frac{M}{2N}\right\rceil\}$ for the proposed and decoupled channel estimation schemes, respectively;
whereas that for the benchmark scheme based on \cite{you2020wireless} is $M+\frac{1}{4} M^2$, which is considerably higher when $M$ is large.

\subsection{Normalized MSE Comparison for Single-User Case}
In the following simulations, we calculate the normalized MSE for different channels estimated over $1,000$ independent fading channel realizations. For example, the normalized MSE of the cascaded user$\rightarrow$IRS~1$\rightarrow$BS channel ${{\bm R}}$ estimated is given by
\begin{align}
{\varepsilon}_{{\bm R}}=\frac{1}{NM_1}     {\mathbb E}\left\{\left\| {\hat{\bm R}}-{{\bm R}}\right\|^{2}_F \Big/{ \left\|{{\bm R}}\right\|^{2}_F} \right\}.
\end{align}
The normalized MSE of other  channels estimated can be similarly calculated as the above.


\begin{figure}
	\centering
	\subfigure[Normalized MSE of ${\bar{\bm Q}}$ in Phase~I versus user transmit power $P$.]{
		\begin{minipage}[b]{0.45\textwidth}
			\includegraphics[width=3.0in]{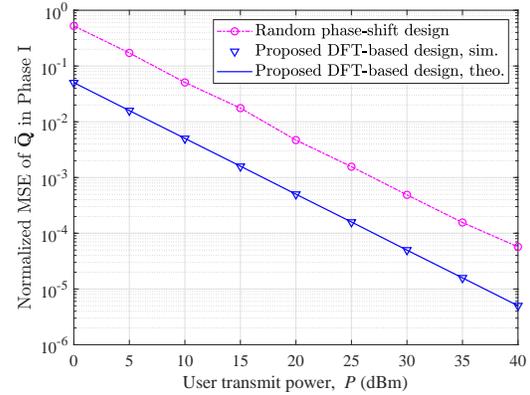}
		\end{minipage}\label{Qbar_SNR}
	}\\
	\subfigure[Normalized MSE of ${{\bm F}}$ in Phase~II versus user transmit power $P$.]{
		\begin{minipage}[b]{0.45\textwidth}
			\includegraphics[width=3.0in]{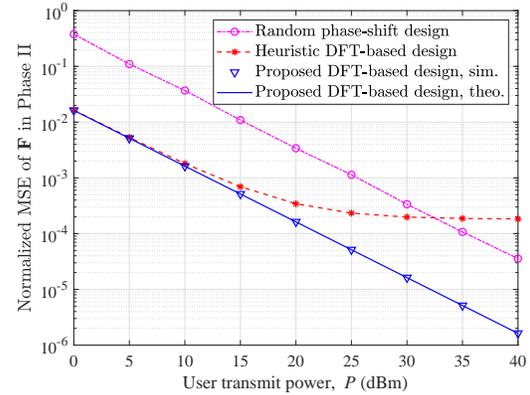}
		\end{minipage}\label{Fk_SNR}
	}
	\caption{Normalized MSE comparison of different training reflection designs for the single-user case with $N=25$.} \label{training design}
\end{figure}

We first compare different training reflection designs of the proposed channel estimation scheme under the single-user setup with $N=25$. In Figs.~\ref{Qbar_SNR} and \ref{Fk_SNR}, we show the normalized MSE versus user transmit power
$P$ for different training reflection designs in Phases I and II, respectively.
It is observed that the theoretical (theo.) analysis of MSE given in Section~\ref{training_design} is in perfect agreement with the simulation (sim.) results for our proposed optimal training design in the two Phases.
For Phase I shown in Fig.~\ref{Qbar_SNR}, the proposed DFT-based design achieves up to $10$~dB power gain over the random phase-shift design where the training reflection phase-shifts of ${\bar{\bm \Theta}}_{2,\rm I}$ are randomly generated following the uniform distribution within $[0, 2\pi)$.
For Phase II shown in Fig.~\ref{Fk_SNR}, the joint training reflection design of the two IRSs is more involved and
we consider the heuristic DFT-based design with ${\bm \Theta}_{1,\rm II} $ and ${\bm \psi}_{\rm II}$ drawn from the first (or any randomly selected) $M_1+1$ rows of the $I_2\times I_2$ DFT matrix as another benchmark design for comparison.
It is observed from Fig.~\ref{Fk_SNR} that by carefully choosing $M_1+1$ rows from the DFT matrix to ensure the perfect orthogonality of ${\bm \Omega}_{\rm II}$ with the joint training reflection design of ${\bm \Theta}_{1,\rm II} $ and ${\bm \psi}_{\rm II}$, our proposed DFT-based design achieves much lower MSE than the heuristic DFT-based design as well as the random phase-shift design.


\begin{figure}
	\centering
	\subfigure[Normalized MSE versus number of pilot symbols allocated to Phase~I for estimating $\left\{{\bar{\bm Q}}, {{\bm E}},{{\bm R}}\right\}$ in \eqref{superposed6} of Lemma~1.]{
		\begin{minipage}[b]{0.45\textwidth}
			\includegraphics[width=3.0in]{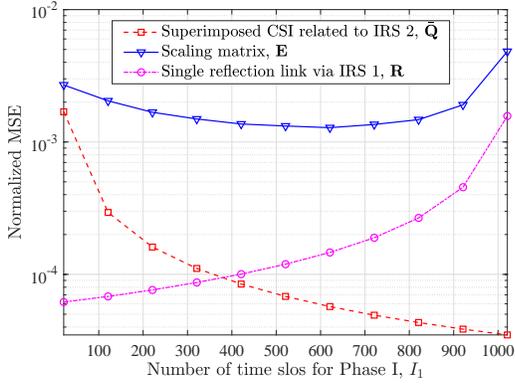}
		\end{minipage}\label{MSE_pilotN}
	}\\
	~~\subfigure[Normalized MSE versus number of pilot symbols allocated to Phase~I for estimating $\left\{\left\{ {{\bm Q}}_{m} \right\}_{m=1}^{M_1}, {\tilde{\bm R}}, {{\bm R}}\right\}$ in \eqref{superposed3}.]{
		\begin{minipage}[b]{0.45\textwidth}
			\includegraphics[width=3.0in]{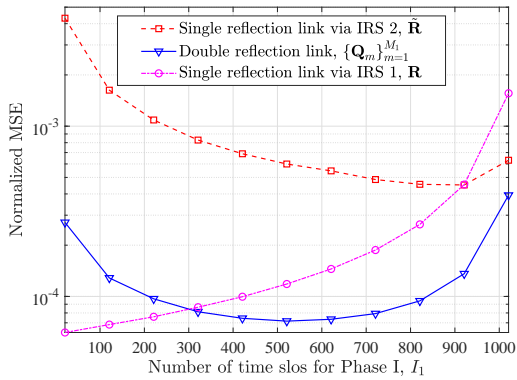}
		\end{minipage}\label{doubleRefle_pilotN}
	}
	\caption{Normalized MSE versus number of pilot symbols allocated to Phase I for the single-user case, where the total number of pilot symbols over the two phases is fixed as $I_1+I_2=1062$, $P=15$ dBm, and $N=25$.} \label{pilotN}
\end{figure}

In Fig.~\ref{pilotN}, we show the normalized MSE versus the number of pilot symbols allocated to Phase~I, $I_1$,
for estimating different CSI required for the channel models in \eqref{superposed6} and \eqref{superposed3}, respectively, where the total number of pilot symbols over the two training phases is fixed as $I_1+I_2=1062$ for the single-user case.
It is observed from Fig.~\ref{pilotN} that the MSE of $\left\{  {{\bm E}},\left\{ {{\bm Q}}_{m} \right\}_{m=1}^{M_1}, {\tilde{\bm R}}\right\}$ first decreases and then increases with the increasing number of pilot symbols allocated to Phase I.
This is expected since the channel estimation error of Phase I will affect the channel estimation performance of Phase II due to the channel estimation error propagation. Intuitively, for the estimation of $\left\{  {{\bm E}},\left\{ {{\bm Q}}_{m} \right\}_{m=1}^{M_1}, {\tilde{\bm R}}\right\}$, if more pilot symbols are allocated to
Phase I, the error propagation effect to Phase II will
be reduced, while less time is left for channel estimation in Phase~II given the fixed overall training overhead.
In contrast, as the CSI ${\bar{\bm Q}}$ (${{\bm R}}$) is estimated in Phase~I (Phase~II) only without suffering from the error propagation issue, the corresponding MSE thus monotonically decreases (increases) with the increasing number of pilot symbols allocated to Phase~I given the fixed overall training overhead.


%

As shown in Fig.~\ref{overhead}, the benchmark scheme based on \cite{you2020wireless} is much less efficient due to its higher-order training overhead.
Moreover, it was shown in \cite{zheng2020Uplink} that
given the same training overhead, the decoupled channel estimation scheme with ON/OFF IRSs already achieves much better performance than the benchmark scheme based on \cite{you2020wireless} in terms of normalized MSE.
As such, we mainly focus on the normalized MSE comparison between the proposed and decoupled channel estimation schemes in Fig.~\ref{comp_SNR}, considering their comparable minimum training overheads shown in Table~\ref{Table of estimation} with $K=1$. 

\begin{figure}
	\centering
	\subfigure[Normalized MSE versus user transmit power $P$ for estimating the CSI of the two single-reflection links $\left\{ {\tilde{\bm R}}, {{\bm R}}\right\}$.]{
		\begin{minipage}[b]{0.45\textwidth}
			\includegraphics[width=3.0in]{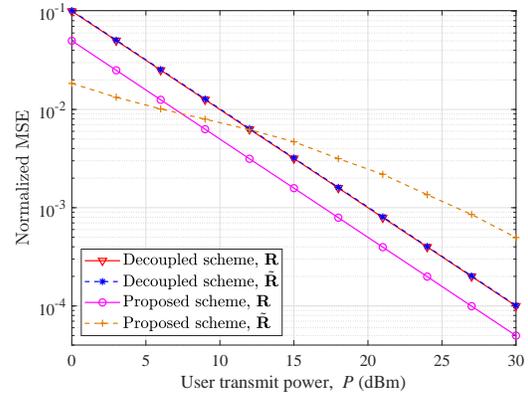}
		\end{minipage}\label{singleRefle_SNR}
	}\\
	~~\subfigure[Normalized MSE versus user transmit power $P$ for estimating the CSI of the double-reflection link $\left\{ {{\bm Q}}_{m} \right\}_{m=1}^{M_1}$.]{
		\begin{minipage}[b]{0.45\textwidth}
			\includegraphics[width=3.0in]{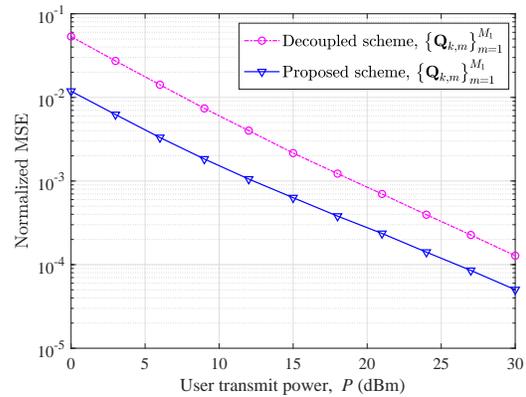}
		\end{minipage}\label{doubleRefle_SNR}
	}
	\caption{Normalized MSE comparison of different channel estimation schemes with $N=25$.} \label{comp_SNR}
\end{figure}
In Fig.~\ref{singleRefle_SNR}, we compare the normalized MSE of the proposed and decoupled channel estimation schemes
versus user transmit power $P$ for estimating the cascaded CSI of the two single-reflection links $\left\{ {\tilde{\bm R}}, {{\bm R}}\right\}$.
It is observed that the proposed channel estimation scheme has different (asymmetric) normalized MSE performance for estimating the cascaded CSI of the two single-reflection links. This can be understood by the fact that the estimation of ${\tilde{\bm R}}$ is affected by the error propagation from Phase I to Phase II; while the estimation of ${{\bm R}}$ involves Phase~II only without suffering from the error propagation issue, as discussed for Fig.~\ref{pilotN}.
In contrast, the decoupled channel estimation scheme has the same (symmetric) normalized MSE performance for estimating the cascaded CSI of the two single-reflection links, which is due to the successive/separate channel estimation of $\left\{ {\tilde{\bm R}}, {{\bm R}}\right\}$ with equal training time.
Two interesting observations are also made as follows
in comparing the performance of the proposed and decoupled channel estimation schemes.
\rev{First, by exploiting the full-reflection power of the two IRSs for estimating ${{\bm R}}$, the proposed channel estimation scheme with always-ON IRSs achieves up to 3~dB power gain over the decoupled counterpart with only one of the two IRSs turned ON successively over training.}
Second, the proposed channel estimation scheme is observed to achieve lower normalized MSE when $P\le 12$~dBm while higher normalized MSE when $P> 12$~dBm for estimating ${\tilde{\bm R}}$, as compared to the decoupled scheme.
This is because the normalized MSE performance of the proposed channel estimation scheme for estimating ${\tilde{\bm R}}$ is limited by the error propagation in the two phases, thus suffering from a lower deceasing rate of the normalized MSE
 as user transmit power $P$ increases.

In Fig.~\ref{doubleRefle_SNR}, we compare the normalized MSE of the proposed and decoupled channel estimation schemes
versus user transmit power $P$ for estimating the cascaded CSI of the double-reflection links $\left\{ {{\bm Q}}_{m} \right\}_{m=1}^{M_1}$. 
As discussed in Fig.~\ref{pilotN}, the estimation of $\left\{ {{\bm Q}}_{m} \right\}_{m=1}^{M_1}$ is affected by the error propagation from Phase I to Phase II in the proposed scheme.
\rev{On the other hand, with the signals canceled over the two single-reflection channels estimated and the estimated ${\tilde{\bm R}}$ taken as the reference CSI, the decoupled channel estimation scheme encounters not only the error propagation issue
but also the residual interference due to imperfect signal cancellation for estimating $\left\{ {{\bm Q}}_{m} \right\}_{m=1}^{M_1}$, thus resulting in higher normalized MSE than the proposed scheme.}
\subsection{Normalized MSE Comparison for Multi-User Case}
Last, we compare the normalized MSE performance between the proposed and decoupled channel estimation schemes for the multi-user case with $K=10$ in Fig.~\ref{comp_MU_SNR}, considering their comparable minimum training overheads as shown in Fig.~\ref{overhead_vsK}. Note that both the proposed and decoupled channel estimation schemes reduce the training overhead for the multi-user case by exploiting the fact that the other users' cascaded channels are lower-dimensional scaled versions of an arbitrary (reference) user's cascaded channel; as a result, both the two schemes only need to estimate the scaling vectors $\left\{{{\bm b}}_{k},{\tilde {\bm b}}_{k}\right\}_{k=2}^{K}$, with user 1 taken as the reference user.
For the decoupled channel estimation scheme,
the scaling vectors $\left\{{{\bm b}}_{k}\right\}_{k=2}^{K}$ and $\left\{{\tilde {\bm b}}_{k}\right\}_{k=2}^{K}$ are successively estimated at the BS with one of the two IRSs turned OFF;
while for the proposed channel estimation scheme, $\left\{{{\bm b}}_{k},{\tilde {\bm b}}_{k}\right\}_{k=2}^{K}$ are jointly estimated at the BS with always-ON IRSs.
\begin{figure}
	\centering
	\subfigure[Normalized MSE versus user transmit power $P$ for estimating the scaling vectors $\left\{{{\bm b}}_{k},{\tilde {\bm b}}_{k}\right\}_{k=2}^{K}$.]{
		\begin{minipage}[b]{0.45\textwidth}
			\includegraphics[width=3.0in]{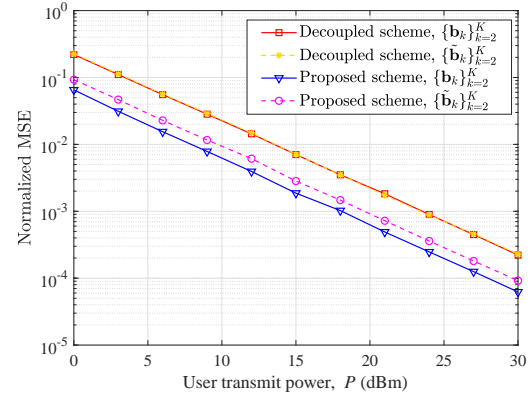}
		\end{minipage}\label{MU_SNR}
	}\\
	~~\subfigure[Normalized MSE versus user transmit power $P$ for estimating the cascaded CSI of all users.]{
		\begin{minipage}[b]{0.45\textwidth}
			\includegraphics[width=3.0in]{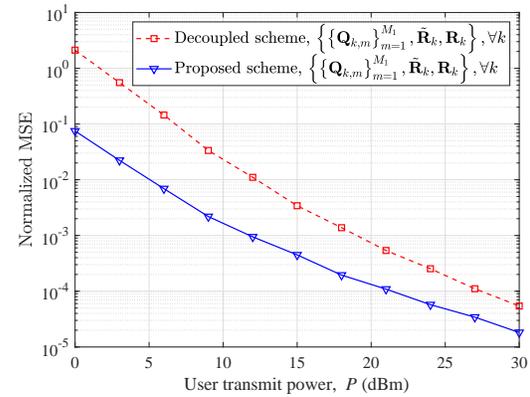}
		\end{minipage}\label{MU_SNR_reCSI}
	}
	\caption{Normalized MSE comparison of different channel estimation schemes with $N=45$ and $K=10$.} \label{comp_MU_SNR}
\end{figure}

In Fig.~\ref{MU_SNR}, we show the normalized MSE of the scaling vectors $\left\{{{\bm b}}_{k},{\tilde {\bm b}}_{k}\right\}_{k=2}^{K}$ by the proposed and decoupled channel estimation schemes
versus user transmit power $P$ under the multi-user setup, where the cascaded CSI of user 1 (i.e., $\left\{{{\bm Q}}_{1,m}\right\}_{m=1}^{M_1}$, ${\tilde{\bm R}}_{1}$, and ${{\bm R}}_{1}$) is assumed to be perfectly available as the reference CSI for ease of comparison. Some important observations are made as follows.
\rev{First, with the always-ON IRSs for maximizing the signal reflection power, the proposed channel estimation scheme achieves much better performance than the decoupled scheme with the ON/OFF IRS reflection design, in terms of the normalized MSE of $\left\{{{\bm b}}_{k}\right\}_{k=2}^{K}$ and $\left\{{\tilde {\bm b}}_{k}\right\}_{k=2}^{K}$.}
Second, the proposed channel estimation scheme has different (asymmetric) normalized MSE performance for estimating $\left\{{{\bm b}}_{k}\right\}_{k=2}^{K}$ and $\left\{{\tilde {\bm b}}_{k}\right\}_{k=2}^{K}$. This can be understood from \eqref{R_k1}-\eqref{cascaded_Q4} 
that each ${{\bm b}}_{k}$ is associated with both the single-reflection (i.e., user~$k$$\rightarrow$IRS~1$\rightarrow$BS) channel and the double-reflection (i.e., user~$k$$\rightarrow$IRS~1$\rightarrow$IRS~2$\rightarrow$BS) channel, thus reaping higher reflection power for channel estimation than each ${\tilde {\bm b}}_{k}$ that is associated with the single-reflection (i.e., user~$k$$\rightarrow$IRS~2$\rightarrow$BS) channel only.
On the other hand, the decoupled channel estimation scheme has the same (symmetric) normalized MSE performance for estimating $\left\{{{\bm b}}_{k}\right\}_{k=2}^{K}$ and $\left\{{\tilde {\bm b}}_{k}\right\}_{k=2}^{K}$, which is due to the successive/separate channel estimation of them with nearly equal training time and single-reflection power.

In Fig.~\ref{MU_SNR_reCSI}, we show the normalized MSE of the proposed and decoupled channel estimation schemes
versus user transmit power $P$ accounting for all the cascaded CSI (i.e., $\left\{\left\{ {{\bm Q}}_{k,m} \right\}_{m=1}^{M_1}, {\tilde{\bm R}}_{k}, {{\bm R}}_{k}\right\}, \forall k$) under the multi-user case.
\rev{It is observed that the proposed channel estimation scheme outperforms the decoupled counterpart significantly. This is expected due to the cumulative superior performance of the proposed channel estimation scheme as shown in Figs.~\ref{comp_SNR} and \ref{MU_SNR}.}

\section{Conclusions and Future Directions}\label{conlusion}
In this paper, we proposed an efficient uplink channel estimation scheme for the double-IRS aided multi-user MIMO system with both IRSs always turned ON during the entire channel training for maximizing their reflected signal power. 
For the single-user case, minimum training time is achieved by exploiting the fact that the cascaded CSI of the single- and double-reflection links related to IRS~2 is lower-dimensional scaled versions of their superimposed CSI.
For the multi-user case, by further exploiting the fact that all other users' cascaded channels are also lower-dimensional scaled versions of an arbitrary (reference) user's cascaded channel (estimated as in the single-user case), the multi-user cascaded channels are estimated with minimum training overhead. 
Moreover, for the proposed channel estimation scheme, we designed the corresponding training reflection phase-shifts for the double IRSs to minimize the channel estimation error over different training phases.
In addition, we showed an interesting trade-off of training time allocation over different training phases with their intricate error propagation effect taken into account.
Simulation results demonstrated the effectiveness of the proposed channel estimation scheme and training reflection phase-shift design,
as compared to the existing channel estimation schemes and other heuristic training designs.

\rev{In the future, it is an interesting direction to study the more general multi-IRS aided multi-user communication system with multi-hop signal reflection \cite{mei2021multi,you2020deploy}, which calls for more technically challenging system designs on channel estimation, joint passive beamforming, and multi-IRS deployment.
Moreover, besides the CSI-based (robust) beamforming design, the beam training \cite{you2020fast} and machine learning \cite{kim2020exploiting} based methods (that do not require the CSI explicitly) are also promising solutions to the multi-IRS aided multi-user communication system, which deserve more investigation in the future.}

\ifCLASSOPTIONcaptionsoff
  \newpage
\fi

\bibliographystyle{IEEEtran}
\bibliography{IRS_MIMO_CE}

\begin{IEEEbiography}[{\includegraphics[width=1in,height=1.25in,clip,keepaspectratio]{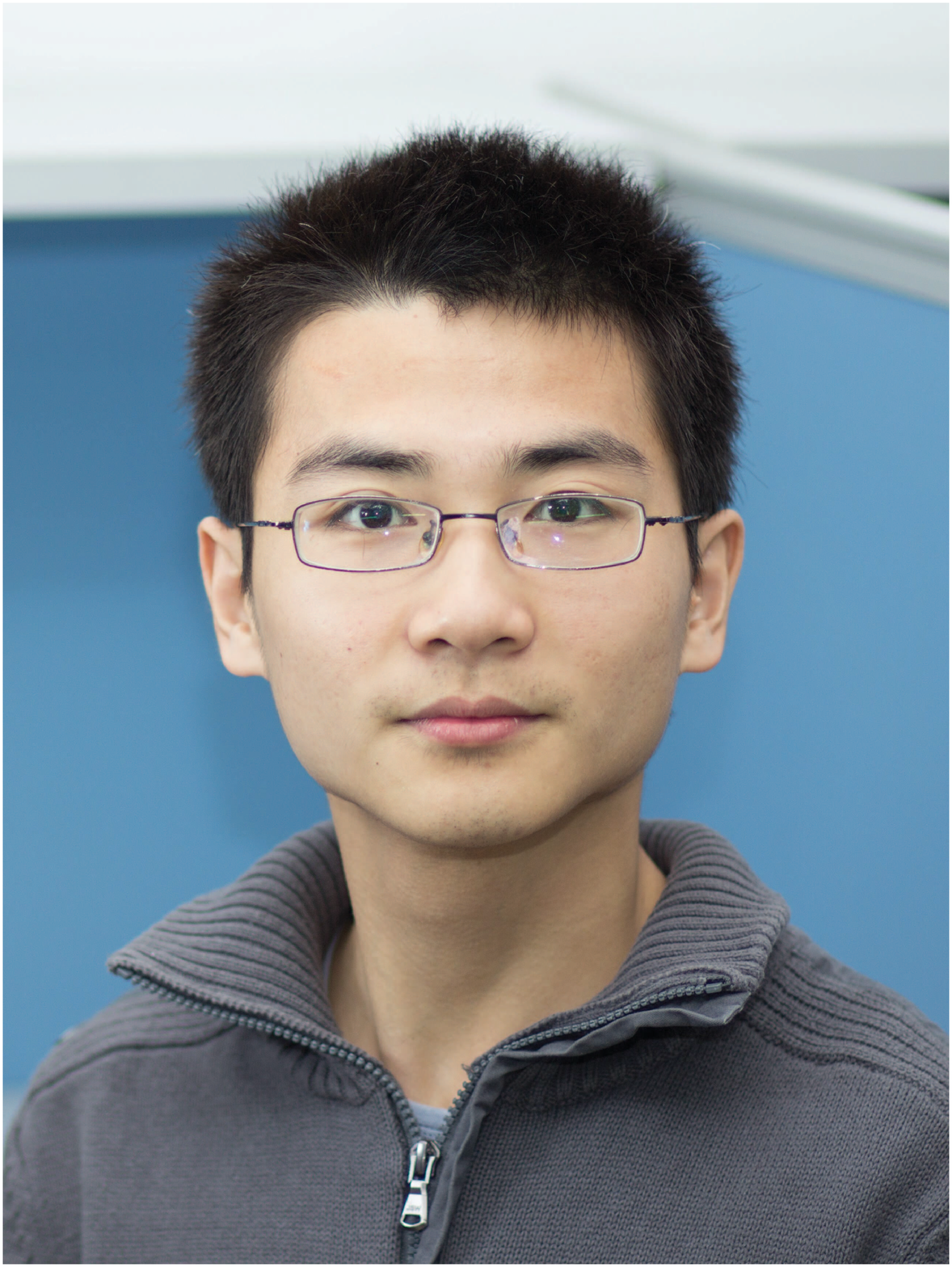}}]{Beixiong Zheng}
	(M'18) received the B.S. and Ph.D.
	degrees from the South China University of Technology, Guangzhou, China, in 2013 and 2018, respectively. 
	He is currently a Research Fellow with the Department of Electrical and Computer Engineering, National University of Singapore.
	His recent research interests include intelligent reflecting surface (IRS), index modulation (IM), and non-orthogonal multiple access (NOMA). 
	
	From 2015 to 2016, he was a Visiting Student Research Collaborator with Columbia University, New York, NY, USA. He was a recipient of
	the Best Paper Award from the IEEE International Conference on Computing, Networking and Communications in 2016, 
	the Best Ph.D. Thesis Award from China Education Society of Electronics in 2018,
	the Exemplary Reviewer of the IEEE TRANSACTIONS ON COMMUNICATIONS,
	and the Outstanding Reviewer of Physical Communication.
\end{IEEEbiography}

\begin{IEEEbiography}[{\includegraphics[width=1in,height=1.25in,clip,keepaspectratio]{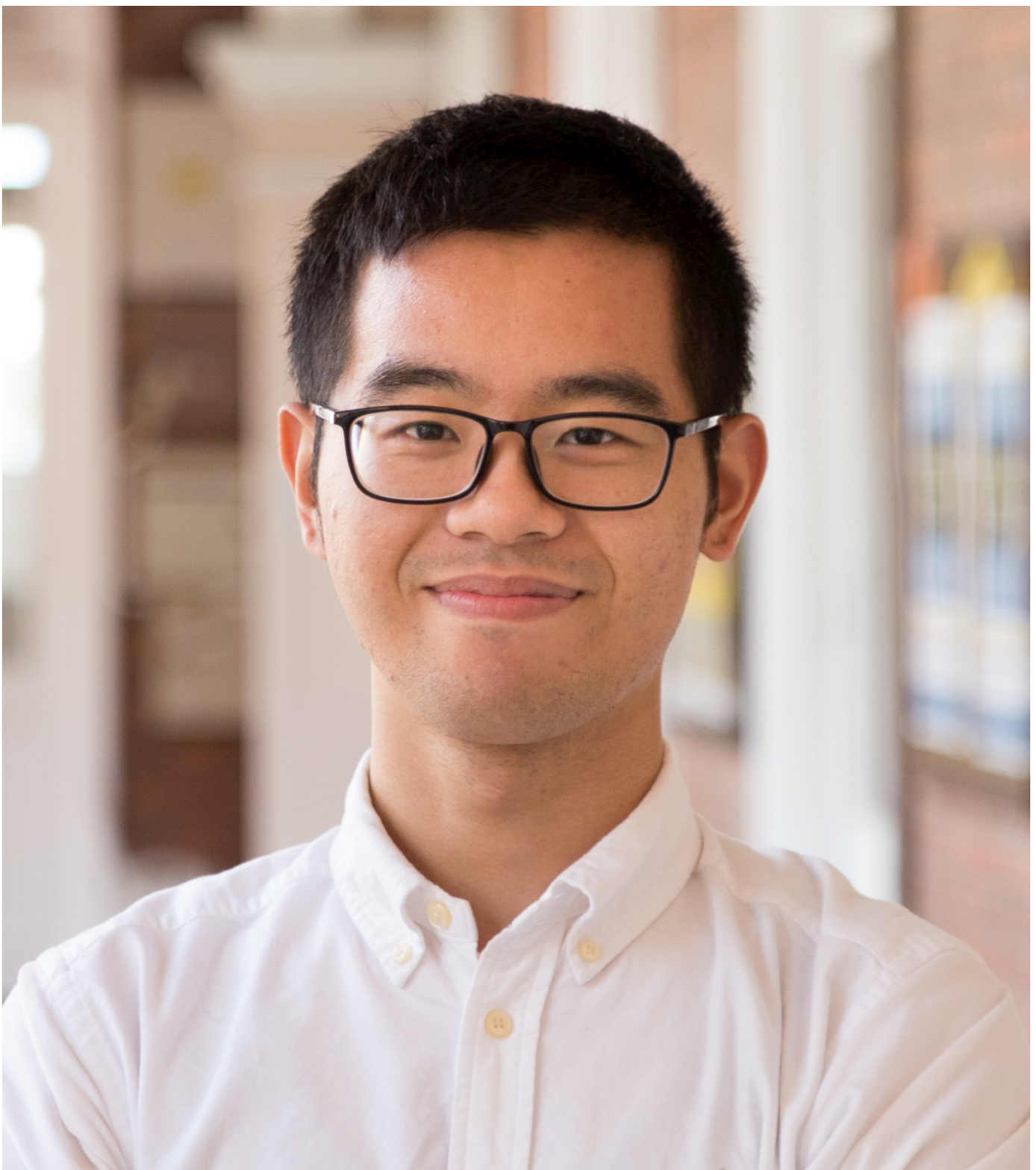}}]{Changsheng You}(M'19) received his B.Eng. degree in 2014 from University of Science and Technology of China in electronic engineering and information science, and Ph.D. degree in 2018 from The University of Hong Kong in electrical and electronic engineering. He is currently a Research Fellow with the Department of Electrical and Computer Engineering, National University of Singapore. He is an editor for IEEE COMMUNICATIONS LETTERS. His research interests include intelligent reflecting surface, UAV communications, edge learning, mobile-edge computing, wireless power transfer, and convex optimization.
	
	Dr. You received the IEEE Communications Society Asia-Pacific Region Outstanding Paper Award in 2019 and the Exemplary Reviewer of the IEEE TRANSACTIONS ON COMMUNICATIONS and IEEE TRANSACTIONS ON WIRELESS COMMUNICATIONS. He was a recipient of the Best Ph.D. Thesis Award of The University of Hong Kong in 2019.
\end{IEEEbiography}

\begin{IEEEbiography}[{\includegraphics[width=1in,height=1.25in,clip,keepaspectratio]{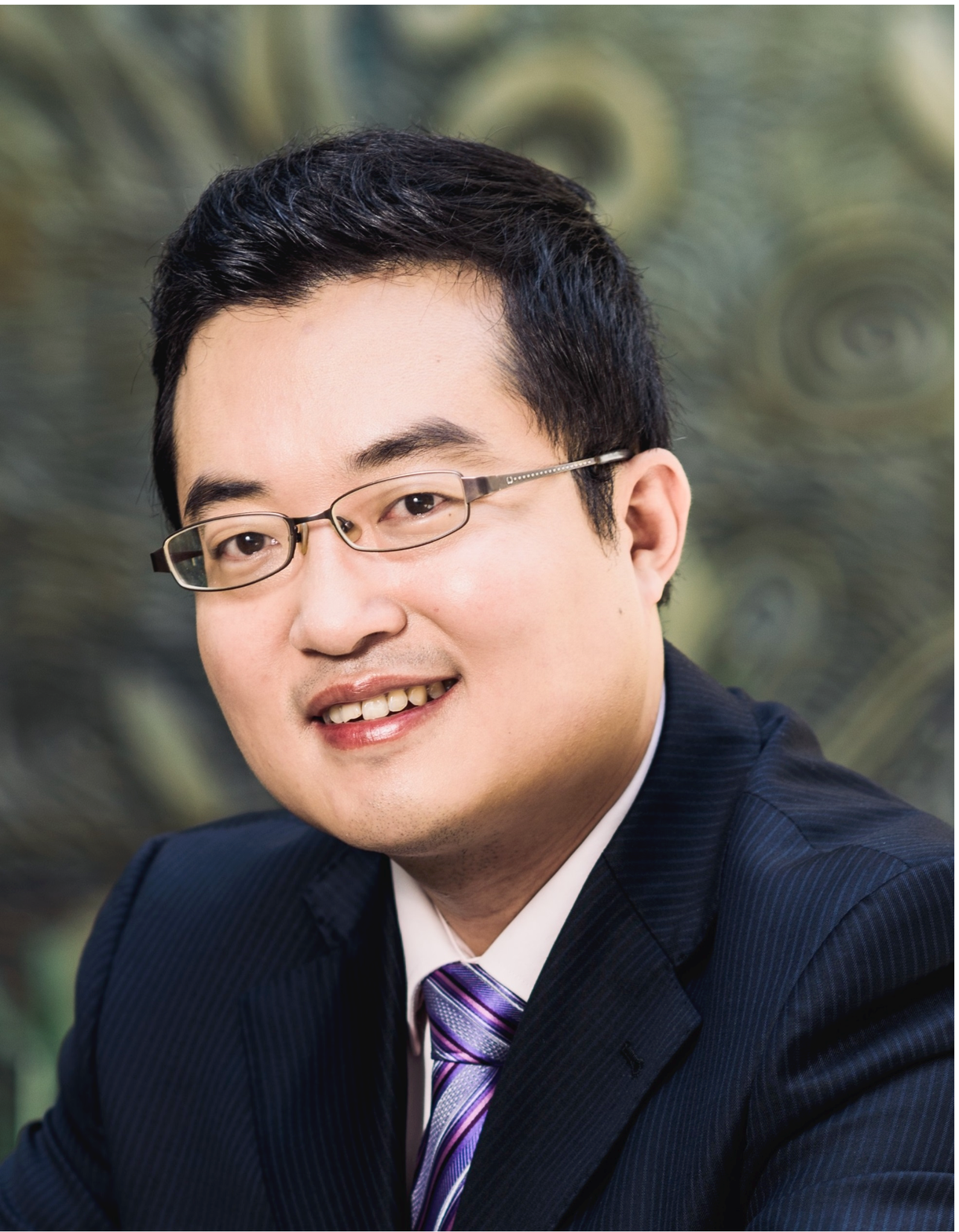}}] {Rui Zhang} (S'00-M'07-SM'15-F'17) received the B.Eng. (first-class Hons.) and M.Eng. degrees from the National University of Singapore, Singapore, and the Ph.D. degree from the Stanford University, Stanford, CA, USA, all in electrical engineering.
	
	From 2007 to 2010, he worked at the Institute for Infocomm Research, ASTAR, Singapore. Since 2010, he has been working with the National University of Singapore, where he is now a Professor in the Department of Electrical and Computer Engineering. He has published over 200 journal papers and over 180 conference papers. He has been listed as a Highly Cited Researcher by Thomson Reuters/Clarivate Analytics since 2015. His current research interests include UAV/satellite communications, wireless power transfer, reconfigurable MIMO, and optimization methods.     
	
	He was the recipient of the 6th IEEE Communications Society Asia-Pacific Region Best Young Researcher Award in 2011, the Young Researcher Award of National University of Singapore in 2015, and the Wireless Communications Technical Committee Recognition Award in 2020. He was the co-recipient of the IEEE Marconi Prize Paper Award in Wireless Communications in 2015 and 2020, the IEEE Communications Society Asia-Pacific Region Best Paper Award in 2016, the IEEE Signal Processing Society Best Paper Award in 2016, the IEEE Communications Society Heinrich Hertz Prize Paper Award in 2017 and 2020, the IEEE Signal Processing Society Donald G. Fink Overview Paper Award in 2017, and the IEEE Technical Committee on Green Communications \& Computing (TCGCC) Best Journal Paper Award in 2017. His co-authored paper received the IEEE Signal Processing Society Young Author Best Paper Award in 2017. He served for over 30 international conferences as the TPC co-chair or an organizing committee member, and as the guest editor for 3 special issues in the IEEE JOURNAL OF SELECTED TOPICS IN SIGNAL PROCESSING and the IEEE JOURNAL ON SELECTED AREAS IN COMMUNICATIONS. He was an elected member of the IEEE Signal Processing Society SPCOM Technical Committee from 2012 to 2017 and SAM Technical Committee from 2013 to 2015, and served as the Vice Chair of the IEEE Communications Society Asia-Pacific Board Technical Affairs Committee from 2014 to 2015. He served as an Editor for the IEEE TRANSACTIONS ON WIRELESS COMMUNICATIONS from 2012 to 2016, the IEEE JOURNAL ON SELECTED AREAS IN COMMUNICATIONS: Green Communications and Networking Series from 2015 to 2016, the IEEE TRANSACTIONS ON SIGNAL PROCESSING from 2013 to 2017, and the IEEE TRANSACTIONS ON GREEN COMMUNICATIONS AND NETWORKING from 2016 to 2020. He is now an Editor for the IEEE TRANSACTIONS ON COMMUNICATIONS. He serves as a member of the Steering Committee of the IEEE Wireless Communications Letters. He is a Distinguished Lecturer of IEEE Signal Processing Society and IEEE Communications Society.
\end{IEEEbiography}

\end{document}